\begin{document}
\title{\Large\bf Box Facets and Cut Facets of Lifted Multicut Polytopes}
\author[1]{Lucas Fabian Naumann}
\author[1]{Jannik Irmai}
\author[1,2]{Shengxian Zhao}
\author[1,2]{Bjoern Andres}
\affil[1]{Faculty of Computer Science, TU Dresden}
\affil[2]{Center for Scalable Data Analytics and AI, Dresden/Leipzig}
\affil[ ]{\textit{bjoern.andres@tu-dresden.de}}
\date{}

\maketitle

\begin{abstract}
The \emph{lifted multicut problem} is a combinatorial optimization problem whose feasible solutions relate one-to-one to the decompositions of a graph $G = (V, E)$.
Given an augmentation $\widehat{G} = (V, E \cup F)$ of $G$ and given costs $c \in \mathbb{R}^{E \cup F}$, the objective is to minimize the sum of those $c_{uw}$ with $uw \in E \cup F$ for which $u$ and $w$ are in distinct components.
For $F = \emptyset$, the problem specializes to the multicut problem, and for $E = \tbinom{V}{2}$ to the clique partitioning problem. 
We study a binary linear program formulation of the lifted multicut problem.
More specifically, we contribute to the analysis of the associated \emph{lifted multicut polytopes}:
Firstly, we establish a necessary, sufficient and efficiently decidable condition for a lower box inequality to define a facet.
Secondly, we show that deciding whether a cut inequality of the binary linear program defines a facet is \textsc{np}-hard.
\end{abstract}   

\section{Introduction}
\label{section:introduction}

The \emph{lifted multicut problem} \citep{keuper-2015a} is a combinatorial optimization problem whose feasible solutions relate one-to-one to the decompositions of a graph $G = (V, E)$.
A decomposition of a graph $G = (V, E)$ is a partition $\Pi$ of the node set $V$ such that for every $U \in \Pi$ the subgraph of $G$ induced by $U$ is connected.
Given an augmentation $\widehat{G} = (V, E \cup F)$ of $G$ with $F \cap E = \emptyset$ and costs $c \in \mathbb{R}^{E \cup F}$, the objective of the lifted multicut problem is to minimize the sum of those costs $c_{uw}$ with $uw \in E \cup F$ for which $u$ and $w$ are in distinct components.
For $F = \emptyset$, the lifted multicut problem specializes to the multicut problem \citep{deza-1992,chopra-1993,chopra-1995}.
For $E = \tbinom{V}{2}$, it is equivalent to the clique partitioning problem \citep{groetschel-1989}. 

We study the binary linear program formulation of the lifted multicut problem by \citet{hornakova-2017} in which variables $x \in \{0,1\}^{E \cup F}$ indicate for all $uw \in E \cup F$ whether the nodes $u$ and $w$ are in the same component, $x_{uw} = 0$, or distinct components, $x_{uw} = 1$:

\begin{definition}{\citep[Def.~9]{hornakova-2017}}
\label{definition:lifted-multicut-problem}
For any connected graph $G = (V, E)$, any augmentation $\widehat{G} = (V, E \cup F)$ with $F \cap E = \emptyset$, and any $c \in \mathbb{R}^{E \cup F}$, the instance of the \emph{(minimum cost) lifted multicut problem} has the form
\begin{align}
\min \quad \left\{
	\sum_{e \in E \cup F} c_e \, x_e 
	\; \middle| \;
	x \in X_{G\widehat{G}}
\right\}
\end{align}
with $X_{G\widehat{G}}$ the set of all $x \in \{0,1\}^{E \cup F}$ that satisfy the following linear inequalities that we discuss in \Cref{section:preliminaries}:

\begin{align}
\forall C \in \text{cycles}(G) \; \forall e \in E_C \colon \ 
& x_e \leq \sum_{e' \in E_C \setminus \{e\}} x_{e'}
\label{eq:cycle}
\\
\forall uw \in F \; \forall P \in uw\text{-paths}(G) \colon \ 
& x_{uw} \leq \sum_{e \in E_P} x_e
\label{eq:path}
\\
\forall uw \in F \; \forall \delta \in uw\text{-cuts}(G) \colon \ 
& 1 - x_{uw} \leq \sum_{e \in \delta} (1 - x_e)
\label{eq:cut}
\end{align}
\end{definition}

\begin{figure}
\centering
\begin{tabular}{@{}l@{}}
\begin{tikzpicture}
\node[style=vertex] at (0, 0) (a) {};
\node[style=vertex] at (1, 0) (b) {};
\node[style=vertex] at (0.5, 0.71) (c) {};
\draw (a) -- (b);
\draw[mygreen] (b) -- (c);
\draw (c) -- (a);
\node at (0.5, 0.15) {$e_1$};
\node at (0.05, 0.4) {$e_2$};
\node[mygreen] at (0.95, 0.4) {$f$};
\end{tikzpicture}
\end{tabular}
\hfill
\begin{tabular}{@{}l@{\ }l@{}}
\begin{tikzpicture}
\node[style=vertex] at (0, 0) (a) {};
\node[style=vertex] at (1, 0) (b) {};
\node[style=vertex] at (0.5, 0.71) (c) {};
\draw (a) -- (b);
\draw[mygreen] (b) -- (c);
\draw (c) -- (a);
\node at (0.5, 0.15) {0};
\node at (0.15, 0.45) {0};
\node at (0.85, 0.45) {\color{mygreen}0};
\end{tikzpicture}
& \begin{tikzpicture}
\node[style=vertex] at (0, 0) (a) {};
\node[style=vertex] at (1, 0) (b) {};
\node[style=vertex] at (0.5, 0.71) (c) {};
\draw (a) -- (b);
\draw[style=cut-edge, mygreen] (b) -- (c);
\draw[style=cut-edge] (c) -- (a);
\node at (0.5, 0.15) {0};
\node at (0.15, 0.45) {1};
\node at (0.85, 0.45) {\color{mygreen}1};
\end{tikzpicture} \\
\begin{tikzpicture}
\node[style=vertex] at (0, 0) (a) {};
\node[style=vertex] at (1, 0) (b) {};
\node[style=vertex] at (0.5, 0.71) (c) {};
\draw[style=cut-edge] (a) -- (b);
\draw[style=cut-edge, mygreen] (b) -- (c);
\draw (c) -- (a);
\node at (0.5, 0.15) {1};
\node at (0.15, 0.45) {0};
\node at (0.85, 0.45) {\color{mygreen}1};
\end{tikzpicture}
& \begin{tikzpicture}
\node[style=vertex] at (0, 0) (a) {};
\node[style=vertex] at (1, 0) (b) {};
\node[style=vertex] at (0.5, 0.71) (c) {};
\draw[style=cut-edge] (a) -- (b);
\draw[style=cut-edge] (b) -- (c);
\draw[style=cut-edge] (c) -- (a);
\node at (0.5, 0.15) {1};
\node at (0.15, 0.45) {1};
\node at (0.85, 0.45) {\color{mygreen}1};
\end{tikzpicture}
\end{tabular}
\hfill
\begin{tikzpicture}[scale=0.9,line join=round]
\draw[arrows=-latex](0,0)--(-.731,-.422);
\draw[arrows=-latex](0,0)--(.914,-.338);
\draw[arrows=-latex](0,0)--(0,1.039);
\filldraw[fill=mygreen!50,fill opacity=0.5](0,0)--(.831,.637)--(-.665,.56)--cycle;
\filldraw[fill=mygreen!50,fill opacity=0.5](0,0)--(.831,.637)--(.166,.253)--cycle;
\filldraw[fill=mygreen!50,fill opacity=0.5](0,0)--(-.665,.56)--(.166,.253)--cycle;
\filldraw[fill=mygreen!50,fill opacity=0.5](.831,.637)--(-.665,.56)--(.166,.253)--cycle;
\node at (-.864,-.499) {$x_{e_1}$};  \node at (1.164,-.43) {$x_{e_2}$};  \node at (0,1.227) {$x_{\color{mygreen}f}$};
\end{tikzpicture}
\caption{Depicted on the left is a graph $G = (V, E)$ with $E = \{e_1, e_2\}$ and an augmentation $\widehat{G} = \allowbreak {(V, E \cup F)}$ of $G$ with $F = \{f\}$.
Depicted in the middle are the four feasible solutions to the lifted multicut problem with respect to $G$ and $\widehat{G}$.
Depicted on the right is the lifted multicut polytope $\Xi_{G \widehat{G}}$.
The figure is adopted from \citet{andres-2023-a-polyhedral}.}
\label{figure:polytope}
\end{figure}
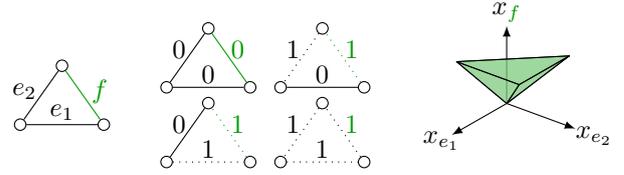

We analyze the convex hull $\Xi_{G \widehat{G}} := \conv X_{G\widehat{G}}$ of the feasible set $X_{G\widehat{G}}$ in the real affine space $\mathbb{R}^{E \cup F}$,
complementing properties established by \citet{hornakova-2017} and \citet{andres-2023-a-polyhedral} who call $\Xi_{G \widehat{G}}$ the \emph{lifted multicut polytope} with respect to $G$ and $\widehat{G}$
(An example of this polytope is depicted in \Cref{figure:polytope}).
More specifically, we establish a necessary, sufficient and efficiently decidable condition for an inequality $0 \leq x_e$ with $e \in E \cup F$ to define a facet of $\Xi_{G \widehat{G}}$.
Our proof involves an application of Menger's theorem \citep{menger-1927}.
In addition, we show: Deciding whether a cut inequality \eqref{eq:cut} defines a facet of $\Xi_{G \widehat{G}}$ is \textsc{np}-hard.

\section{Related Work}
\label{section:related-work}

The lifted multicut problem was introduced in the context of image and mesh segmentation by
\citet{keuper-2015a}
and is discussed in further detail by 
\citet{hornakova-2017} and \citet{andres-2023-a-polyhedral}.
It has diverse applications, notably to the tasks of image segmentation 
\citep{keuper-2015a, beier-2016, beier-2017},
video segmentation 
\citep{keuper-2017, keuper-2020},
and multiple object tracking 
\citep{tang-2017-multiple, nguyen-2022, kostyukhin-2023}.
For these applications, local search algorithms are defined, implemented and compared empirically by \citet{keuper-2015a,levinkov-2017}.
Two branch-and-cut algorithms for the lifted multicut problem are defined, implemented and compared empirically by \citet{hornakova-2017}.

In order to significantly reduce the runtime of their branch-and-cut algorithm, 
\citet{hornakova-2017} are also the first to establish properties of lifted multicut polytopes, including its dimension $\dim \Xi_{G \widehat{G}} = |E \cup F|$ and a characterization of facets induced by cycle inequalities \eqref{eq:cycle}, path inequalities \eqref{eq:path}, upper box inequalities $x_e \leq 1$ for $e \in E \cup F$, and lower box inequalities $0 \leq x_e$ for $e \in E$.
Moreover, they establish necessary conditions on facets of lifted multicut polytopes induced by cut inequalities \eqref{eq:cut} and lower box inequalities $0 \leq x_e$ for $e \in F$.
\citet{andres-2023-a-polyhedral} describe an additional class of facets induced by so-called half-chorded odd cycle inequalities and show that these are facets also of a polytope isomorphic to the clique partitioning polytope \cite{groetschel-1990,deza-1992,deza-1997,sorensen-2002}.
Additionally, they establish the class of facets induced by so-called intersection inequalities, which is discovered based on a necessary condition for facets induced by cut inequalities \eqref{eq:cut}.
However, they do not make progress toward characterizing the facets of lifted multicut polytopes induced by cut inequalities themselves  
or lower box inequalities $0 \leq x_e$ for $e \in F$, which motivates the work we show in this article.

\section{Preliminaries}
\label{section:preliminaries}

For clarity, we adopt elementary terms and notation:
Let be $G = (V,E)$ a graph. 
For any distinct $u, w \in V$, we write $uw$ and $wu$ as an abbreviation of the set $\set{u, w}$.
For any subset $A \subseteq E$, we write $\mathbbm{1}_A \in \set{0,1}^E$ for the characteristic vector of the set $A$, i.e. $(\mathbbm{1}_A)_e = 1 \Leftrightarrow e \in A$ for all $e \in E$.
For any distinct $u, w \in V$ and any $S \subseteq V$, we call $S$ a \emph{$uw$-separator} of $G$ and say that $u$ and $w$ are \emph{separated} by $S$ in $G$ if and only if every $uw$-path in $G$ contains a node of $S$.
We call $S$ \emph{proper} if and only if $u \notin S$ and $w \notin S$.
For any $u,v,w \in V$ such that $S = \{v\}$ is a $uw$-separator of $G$, we call $v$ a \emph{$uw$-cut-node} of $G$.
We call it \emph{proper} if and only if $S$ is \emph{proper}.
We let $C_{uw}(G)$ denote the set of all proper $uw$-cut-nodes of $G$.

Next, we discuss briefly the inequalities \eqref{eq:cycle}--\eqref{eq:cut} from \Cref{definition:lifted-multicut-problem} and refer to \citet[Proposition~3]{andres-2023-a-polyhedral} for details:
The inequalities \eqref{eq:cycle} state that no cycle in $G$ intersects with the set $\{e \in E \mid x_e = 1\}$ in precisely one edge.
This property is equivalent to the existence of a decomposition $\Pi$ of $G$ such that for any $uw \in E$: $x_{uw} = 0$ if and only if there exists a component $U \in \Pi$ such that $uw \subseteq U$.
The inequalities \eqref{eq:path} and \eqref{eq:cut} together state for any $uw \in F$ that $x_{uw} = 0$ if and only if there exists a $uw$-path $(V_P, E_P)$ in $G$ with all edges $e \in E_P$ such that $x_e = 0$, i.e.~that $x_{uw} = 0$ if and only if there exists a component $U \in \Pi$ such that $uw \subseteq U$.

Below, we establish one consequence of these properties that we apply in this article:
\begin{lemma}
    \label{lemma:construction}
    For 
    any connected graph $G = (V, E)$,
    any augmentation $\widehat{G} = (V, E \cup F)$ with $F \cap E = \emptyset$,
    any disjoint node sets $A \subseteq 2^V$ 
    such that for all $U \in A$ the subgraph $G[U]$ of $G$ induced by $U$ is connected and
    any $x^A \in \{0,1\}^{E \cup F}$ such that for all $uw \in E \cup F$, $x^A_{uw} = 0 \Leftrightarrow \exists U \in A \colon uw \subseteq U$, we have
    $x^A \in X_{G\widehat{G}}$.
\end{lemma}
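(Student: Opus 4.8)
The plan is to verify that $x^A$ satisfies each of the three families of inequalities \eqref{eq:cycle}, \eqref{eq:path} and \eqref{eq:cut}, reducing everything to a single decomposition of $G$ whose cut indicator is precisely $x^A$. First I would extend the family $A$ to a decomposition $\Pi$ of $G$ by adjoining the singletons $\{v\}$ for every node $v \in V$ not covered by $\bigcup A$. Since the sets in $A$ are pairwise disjoint and each $U \in A$ induces a connected subgraph $G[U]$ (and singletons induce connected subgraphs trivially), $\Pi$ is a partition of $V$ into connected components, i.e.\ a decomposition of $G$. By the defining property of $x^A$, for every $uw \in E \cup F$ we have $x^A_{uw} = 0$ if and only if $u$ and $w$ lie in a common component of $\Pi$; that is, $x^A$ is exactly the cut indicator of $\Pi$.

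The remaining work is three short case analyses, each trivial unless a variable on the smaller side of the inequality is forced, so I treat only the nontrivial case. For \eqref{eq:cycle}, fix a cycle $C$ and an edge $e = uw \in E_C$ with $x^A_e = 1$, so that $u$ and $w$ lie in distinct components of $\Pi$. The edges $E_C \setminus \{e\}$ form a $uw$-path; traversing it from $u$ to $w$ we pass from $u$'s component to $w$'s, so some consecutive edge $e'$ joins two distinct components and satisfies $x^A_{e'} = 1$, giving $\sum_{e' \in E_C \setminus \{e\}} x^A_{e'} \geq 1$. The path inequalities \eqref{eq:path} are handled identically: if $x^A_{uw} = 1$ for some $uw \in F$, then along any $uw$-path the components must change at least once, producing an edge $e \in E_P$ with $x^A_e = 1$.

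For \eqref{eq:cut}, fix $uw \in F$ and a $uw$-cut $\delta$, and consider the nontrivial case $x^A_{uw} = 0$, i.e.\ $u$ and $w$ lie in a common component $U \in \Pi$. This is where the connectivity hypothesis on $A$ is essential: because $G[U]$ is connected, there is a $uw$-path all of whose edges lie inside $U$ and hence satisfy $x^A_e = 0$. Since $\delta$ is a $uw$-cut, this path contains some edge $e \in \delta$, and for that edge $1 - x^A_e = 1$, whence $\sum_{e \in \delta}(1 - x^A_e) \geq 1 = 1 - x^A_{uw}$.

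I expect the difficulty to be bookkeeping rather than depth: one must not assume that $A$ already covers $V$ (hence the completion by singletons), and the connectivity of each $G[U]$ must be invoked precisely at the cut inequality, where an entirely within-component $uw$-path is needed to meet every cut. Alternatively, the whole argument shortens to a single observation by appealing to the characterizations recalled just before the lemma (via \citet[Proposition~3]{andres-2023-a-polyhedral}): those state that \eqref{eq:cycle} encodes exactly the decompositions and that \eqref{eq:path} and \eqref{eq:cut} together encode ``$x^A_{uw} = 0$ iff $u$ and $w$ are joined by a $uw$-path of edges with value $0$''. The construction above simply exhibits $\Pi$ as the witnessing decomposition.
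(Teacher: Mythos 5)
Your proposal is correct, and its core is identical to the paper's proof: the same completion of $A$ by singletons to a decomposition $\Pi$ of $G$, followed by the observation that $x^A$ is exactly the cut indicator of $\Pi$. Where you differ is in how you conclude. The paper stops there, leaning on the characterization recalled in the preliminaries (from \citet[Proposition~3]{andres-2023-a-polyhedral}) that the vectors satisfying \eqref{eq:cycle}--\eqref{eq:cut} are precisely the indicators of decompositions; this is the route you mention only as an alternative in your last paragraph. Your main line instead verifies the three inequality families from first principles: \eqref{eq:cycle} and \eqref{eq:path} via the fact that any walk between nodes in distinct components of $\Pi$ must cross a component boundary, and \eqref{eq:cut} via connectivity of $G[U]$, which yields a $uw$-path of $0$-valued edges that every $uw$-cut must intersect. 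Your case analysis is sound, including the points that matter: the common component of $u$ and $w$ cannot be a singleton (so it lies in $A$ and inherits the connectivity hypothesis), and an edge joining distinct components of $\Pi$ necessarily gets value $1$ under $x^A$. What your verification buys is self-containedness --- the lemma then stands without invoking the cited proposition, and in effect you prove the relevant direction of that characterization; what the paper's route buys is brevity, at the cost of an external dependency that the paper has legitimately set up just before the lemma. Both are valid proofs.
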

\begin{proof}
    Firstly, $\Pi = (A \setminus \set{\emptyset}) \cup \set{\set{v} \mid v \in V \setminus \cup_{U \in A} U}$ is a decomposition of $G$.
    Secondly, $x^A$ is such that for any $uw \in E \cup F$ we have $x^A_{uw} = 0$ if and only if there is a $U \in \Pi$ such that $uw \subseteq U$.
    Thus, $x^A \in X_{G \widehat{G}}$.
\end{proof}

\section{Lower Box Facets}
\label{section:box-inequalities}

In this section, we establish a necessary, sufficient and efficiently decidable condition for a lower box inequality $0 \leq x_{uw}$ with $uw \in E \cup F$ to define a facet of a lifted multicut polytope $\Xi_{G \widehat{G}}$.
Examples are shown in \Cref{figure:box-necessity}.

\begin{theorem}\label{theorem:box-inequalities}
For any connected graph $G = (V,E)$, 
any augmentation $\widehat{G} = (V, E \cup F)$ with $F \cap E = \emptyset$
and any $uw \in E \cup F$,
the lower box inequality $0 \leq x_{uw}$ is facet-defining for $\Xi_{G \widehat{G}}$ if and only if the following two conditions hold:
\begin{enumerate}
\item There exists no simple path in $\widehat{G}$ of length at least one, besides $(\set{u,w}, \{uw\})$, whose end-nodes are $uw$-cut-nodes of $G$ and whose edges are $uw$-separators of $G$.
\label{theorem:condition:path}
\item There exists no simple cycle in $\widehat{G}$ whose edges are $uw$-separators of $G$.
\label{theorem:condition:cycle}
\end{enumerate}
\end{theorem}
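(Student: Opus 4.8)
The plan is to exploit that $\Xi_{G\widehat G}$ is full-dimensional, with $\dim \Xi_{G\widehat G} = |E \cup F|$ (Hornáková et al.), so that the valid inequality $0 \le x_{uw}$ defines a facet if and only if the face $\mathcal F := \set{x \in \Xi_{G\widehat G} : x_{uw} = 0}$ has dimension $|E\cup F| - 1$, equivalently, every affine equation $\langle b, x\rangle = \gamma$ holding on all of $\mathcal F$ is a scalar multiple of $x_{uw} = 0$. First I would record a \emph{linearization} of the face: for $x \in X_{G\widehat G}$ with $x_{uw} = 0$ and associated decomposition $\Pi$, let $U \in \Pi$ be the component with $u, w \in U$ and set $y_s = 1$ if $s \notin U$ and $y_s = 0$ otherwise. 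Since $G[U]$ is connected it contains a $uw$-path, so $y_s = 0$ for $s \in \set{u, w}$ and, more generally, for every $s \in C_{uw}(G)$; and for every edge $vt \in E \cup F$ that is a $uw$-separator of $G$ at most one of its endpoints lies outside $U$, which yields the identity $x_{vt} = y_v + y_t$ on $\mathcal F$. It is convenient to write $H$ for the subgraph of $\widehat G$ formed by the $uw$-separator edges and $K = \set{u,w} \cup C_{uw}(G)$ for the set of nodes pinned to $U$; then \cref{theorem:condition:path} says that no two nodes of $K$ are joined in $H$ by a path other than the edge $uw$, and \cref{theorem:condition:cycle} says that $H$ is a forest.

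For \textbf{necessity} I argue by contraposition. If \cref{theorem:condition:path} fails there is a simple path $v_0 \cdots v_k$ in $H$ with $v_0, v_k \in K$ that is not the single edge $uw$; telescoping the identities $x_{v_iv_{i+1}} = y_{v_i} + y_{v_{i+1}}$ and using $y_{v_0} = y_{v_k} = 0$ yields the equation $\sum_{i=0}^{k-1} (-1)^i x_{v_iv_{i+1}} = 0$ valid on $\mathcal F$, which involves an edge distinct from $uw$ and is therefore not a multiple of $x_{uw} = 0$, so $\dim \mathcal F \le |E\cup F| - 2$. If \cref{theorem:condition:cycle} fails, $H$ contains a simple cycle $z_0 \cdots z_{m-1}z_0$; the alternating sum $\sum_{i} (-1)^i x_{z_iz_{i+1}}$ equals $0$ identically when $m$ is even and equals $2y_{z_0}$ when $m$ is odd, producing a nontrivial equation on $\mathcal F$ whenever the cycle is even or passes through a node of $K$. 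To discharge the remaining case, an odd separator cycle avoiding $K$, I invoke Menger's theorem: when $C_{uw}(G) = \emptyset$ the minimum $uw$-separator has size two, so there are exactly two internally node-disjoint $uw$-paths, and each separator edge with both endpoints outside $\set{u,w}$ must join these two paths; this makes $H$ bipartite along the two paths and rules out odd cycles avoiding $K$, closing necessity.

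For \textbf{sufficiency} I assume both conditions and exhibit $|E\cup F|$ affinely independent points of $\mathcal F$, equivalently I show $b_e = 0$ for every $e \neq uw$ whenever $\langle b, x\rangle = \gamma$ holds on $\mathcal F$; here $\gamma = 0$ follows from the origin $x = 0$, which lies in $\mathcal F$ by \cref{lemma:construction} since $G$ is connected. I would order the edges of $E\cup F \setminus \set{uw}$ and, for each edge $e$, build a point $x^e \in \mathcal F$ from a node set $A$ via \cref{lemma:construction} such that $x^e$ cuts $e$ but no later edge, giving a triangular and hence linearly independent system. The point is that, because $H$ is a forest in which no two elements of $K$ are linked, cutting any one edge can be realized by splitting off a connected node set that keeps $u$ and $w$ together and does not force cuts on the remaining edges, with Menger's theorem again supplying the internally disjoint $uw$-paths along which the retained connected blocks are routed.

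The step I expect to be the main obstacle is this sufficiency construction: converting ``$H$ is a forest with no $K$--$K$ path'' into an explicit, order-respecting family of decompositions that cut exactly the prescribed edges, while certifying both feasibility (each retained block induces a connected subgraph, containing $u$ and $w$ where required) and affine independence. The parity bookkeeping for odd separator cycles in the necessity direction is the secondary subtlety, and it is precisely where Menger's theorem, via the two internally disjoint $uw$-paths carrying all size-two separators, does the decisive work.
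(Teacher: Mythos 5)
Your necessity argument follows the same route as the paper's: the identity $x_{vt} = y_v + y_t$, valid on the face for every $uw$-separator edge $vt$, together with telescoping, reproduces exactly the paper's orthogonal equality \eqref{equation:equality}, which the paper verifies instead via an explicit bijection $\vartheta_x$ between odd and even cut edges of the path or cycle; your $y$-linearization is arguably cleaner. The gap is in your Menger step. What must be ruled out is an odd cycle of separator edges avoiding $K = \{u,w\} \cup C_{uw}(G)$; this hypothesis does \emph{not} give $C_{uw}(G) = \emptyset$, which is what your argument assumes. If $G$ has proper $uw$-cut-nodes (necessarily off the cycle), then the minimum proper $uw$-separator has size one, Menger yields only one internally node-disjoint path rather than two, and the conclusion ``each separator edge joins the two paths, hence the cycle is bipartite'' collapses. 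The paper's \Cref{lemma:cycle-even-length} exists precisely to close this case: it passes to an auxiliary graph $G'$ obtained by deleting $C_{uw}(G)$ and adding bypass edges, then verifies that the cycle survives, that $G'$ has no proper $uw$-cut-nodes, and that the cycle edges remain proper $uw$-separators of $G'$ (and that $u,w$ are non-adjacent there, another hypothesis of Menger you do not check) before applying the theorem. You need this reduction or an equivalent one.

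The second, larger gap is sufficiency, which you acknowledge but do not carry out, and whose core claim is false as stated. You propose, for each edge $e \neq uw$, a single feasible point $x^e$ of the face that ``cuts $e$ but no later edge,'' asserting that cutting one edge ``does not force cuts on the remaining edges.'' For separator edges this is impossible: if $st$ is a $uw$-separator and $x$ is feasible with $x_{uw} = 0$ and $x_{st} = 1$, then exactly one endpoint, say $s$, lies outside the component $U$ containing $u$ and $w$, and then \emph{every} edge of $\widehat{G}$ joining $s$ to a node of $U$ is cut as well; more generally, any feasible point cuts the entire set of edges leaving its components, so single points cannot be made triangular in your sense. This is exactly why the paper works in the difference space $\lin(\sigma - \sigma)$ and uses signed combinations of several points so that uncontrolled cut edges cancel: for non-separator edges, $\mathbbm{1}_{\{st\}} = -x^{V_1}-x^{V_2}+x^{V_3}+x^{V_4}$; for separator edges with no cut-node endpoint, it recovers $\mathbbm{1}_{\{st\}}$ from the aggregated vector $\mathbbm{1}_{E_{\widehat{G}}(P,v)}$ by subtracting previously constructed singletons, organized along the recursion $H_0, H_1, H_2, \ldots$ of \eqref{equation:definition-Hi}; that this recursion exhausts all separator edges is \Cref{claim:H-partition}, itself a non-trivial consequence of conditions \ref{theorem:condition:path} and \ref{theorem:condition:cycle} proved by a longest-path argument; a third case handles edges with exactly one cut-node endpoint. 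Since neither this construction nor any substitute appears in your proposal, the sufficiency direction remains unproven.
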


\begin{figure}[t]
	\centering
	\begin{subfigure}{0.4\textwidth}
		\centering
		\resizebox{!}{!}{
			\begin{tikzpicture}
    \node[vertex, label=left:$u$] (u) at (0, 0) {};
    \node[vertex, label=below:$v_0$] (v0) at (1, -0.75) {};
    \node[vertex, label=above:$v_1$] (v1) at (1, 0.75) {};
    \node[vertex, label=below:$v_2$] (v2) at (2.5, -0.75) {};
    \node[vertex, label=above:$v_3$] (v3) at (2.5, 0.75) {};
    \node[vertex, label=below:$v_4$] (v4) at (4, -0.75) {};
    \node[vertex, label=above:$v_5$] (v5) at (4, 0.75) {};
    \node[vertex, label=right:$w$] (w) at (5, 0) {};

    \node[vertex] (u1) at (0.5, 0.375) {};
    \node[vertex] (u3) at (4.5, 0.375) {};

    \draw[mygreen][thick] (u) -- (v0);
    \draw (u) -- (u1) --(v1);
    \draw (v1) -- (v3);
    \draw (v0) -- (v2);
    \draw[myred][thick] (v0) -- (v1);
    \draw[dashed][myred][thick] (v2) -- (v3);
    \draw (v4) -- (v5);
    \draw[dashed][mygreen][thick] (v1) -- (v2);
    \draw[dashed][mygreen][thick] (v3) -- (v4);
    \draw (v2) -- (v4);
    \draw (v3) -- (v5);
    \draw[myred][thick] (v4) -- (w);
    \draw (v5) -- (u3) -- (w);

    \draw[dashed] (u) .. controls +(0,2) and +(0,2) .. (w) node {};
\end{tikzpicture}
		}
		\caption{$0 = x_{uv_0} - x_{v_0v_1} + x_{v_1v_2} - x_{v_2v_3} + x_{v_3v_4} - x_{v_4w}$}
	\end{subfigure}
	\begin{subfigure}{0.4\textwidth}
		\centering
		\resizebox{!}{!}{
			\begin{tikzpicture}
    \node[vertex, label=left:$u$] (u) at (0, 0) {};
    \node[vertex, label=below:$v_0$] (v0) at (1, -0.75) {};
    \node[vertex, label=above:$v_1$] (v1) at (1, 0.75) {};
    \node[vertex, label=below:$v_2$] (v2) at (2.5, -0.75) {};
    \node[vertex, label=above:$v_3$] (v3) at (2.5, 0.75) {};
    \node[vertex, label=below:$v_4$] (v4) at (4, -0.75) {};
    \node[vertex, label=above:$v_5$] (v5) at (4, 0.75) {};
    \node[vertex, label=right:$w$] (w) at (5, 0) {};

    \node[vertex] (u0) at (0.5, -0.375) {};
    \node[vertex] (u1) at (0.5, 0.375) {};
    \node[vertex] (u2) at (4.5, -0.375) {};
    \node[vertex] (u3) at (4.5, 0.375) {};

    \draw (u) -- (u0) -- (v0);
    \draw (u) -- (u1) -- (v1);
    \draw (v1) -- (v3);
    \draw (v0) -- (v2);
    \draw[mygreen][thick] (v0) -- (v1);
    \draw[dashed][mygreen][thick] (v2) -- (v3);
    \draw[mygreen][thick] (v4) -- (v5);
    \draw[dashed][myred][thick] (v1) -- (v2);
    \draw[dashed][myred][thick] (v3) -- (v4);
    \draw[dashed][myred][thick] (v5) -- (v0);
    \draw (v2) -- (v4);
    \draw (v3) -- (v5);
    \draw (v4) -- (u2) -- (w);
    \draw (v5) -- (u3) -- (w);

    \draw[dashed] (u) .. controls +(0,2) and +(0,2) .. (w) node {};
\end{tikzpicture}
		}
		\caption{$0 = x_{v_0v_1} - x_{v_1v_2} + x_{v_2v_3} - x_{v_3v_4} + x_{v_4v_5} - x_{v_5v_0}$}
	\end{subfigure}
	\caption{
		Depicted above are two examples of a graph $G$ (solid edges) and augmentation $\widehat{G}$ (dashed edges) such that a condition of \Cref{theorem:box-inequalities} is violated for the inequality $0 \leq x_{uw}$.
		In the upper example, the path with the edge set $\{uv_0, v_0v_1, v_1v_2, v_2v_3, v_3v_4, v_4w\}$ violates \ref{theorem:condition:path}. 
		In the lower example, the cycle with the edge set $\{v_0v_1, v_1v_2, v_2v_3, v_3v_4, v_4v_5, v_5v_0\}$ violates \ref{theorem:condition:cycle}. 
		For both cases, Equation \eqref{equation:equality} from the proof of \Cref{theorem:box-inequalities} is stated.
		Edges depicted in green occur with a positive sign in this equation, and edges depicted in red occur with a negative sign.
    }
	\label{figure:box-necessity}
\end{figure}

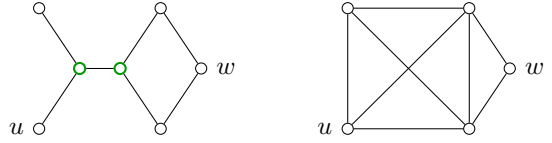
\begin{figure}[t]
	\centering
	\begin{subfigure}{.23\textwidth}
		\centering
		\resizebox{.90\textwidth}{!}{
			\begin{tikzpicture}
    \node[vertex, label=left:$u$] (u) at (0, -0.75) {};
    \node[vertex] (v0) at (0, 0.75) {};
    \node[vertex, color=mygreen, fill=white, thick] (c1) at (0.5, 0) {};
    \node[vertex, color=mygreen, fill=white, thick] (c2) at (1, 0) {};
    \node[vertex] (v2) at (1.5, 0.75) {};
    \node[vertex] (v3) at (1.5, -0.75) {};
    \node[vertex, label=right:$w$] (w) at (2, 0) {};

    \draw (u) -- (c1) -- (c2);
    \draw (c1) -- (v0);
    \draw (c2) -- (v2) -- (w);
    \draw (c2) -- (v3) -- (w);
\end{tikzpicture}
		}
	\end{subfigure}
	\begin{subfigure}{.23\textwidth}
		\centering
		\resizebox{.90\textwidth}{!}{
			\begin{tikzpicture}
    \node[vertex, label=left:$u$] (u) at (0, -0.75) {};
    \node[vertex] (v0) at (0, 0.75) {};
    \node[vertex] (v1) at (1.5, -0.75) {};
    \node[vertex] (v2) at (1.5, 0.75) {};
    \node[vertex, label=right:$w$] (w) at (2, 0) {};

    \draw (u) -- (v0);
    \draw (u) -- (v1);
    \draw (u) -- (v2);

    \draw (v0) -- (v1);
    \draw (v0) -- (v2);

    \draw (v1) -- (v2);
    \draw (v1) -- (w);
    \draw (v2) -- (w);
\end{tikzpicture}
		}
	\end{subfigure}
	\caption{
		Depicted on the left is a graph $G$, and depicted on the right is the corresponding auxiliary graph $G'$, whose construction is described in the proof of \Cref{lemma:cycle-even-length}. 
		The nodes depicted in green are proper $uw$-cut-nodes of $G$ and get removed in the construction of $G'$.
	}
	\label{figure:even-cycle}
\end{figure}

In the remainder of this section, we prove a structural lemma and then apply this lemma in order to prove \Cref{theorem:box-inequalities}.

\begin{lemma}\label{lemma:cycle-even-length}
Let $G = (V, E)$ be a graph and let $u,w \in V$.
Any simple cycle $(V_C, E_C)$ with $V_C \subseteq V$ and $E_C \subseteq \tbinom{V}{2}$ such that no $v \in V_C$ is a $uw$-cut-node of $G$ and every $e \in E_C$ is a $uw$-separator of $G$ is even.
\end{lemma}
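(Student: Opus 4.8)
I would prove that $(V_C,E_C)$ is \emph{bipartite}, i.e.\ that $V_C$ partitions into two classes so that every edge of $E_C$ runs between them; for a cycle this is equivalent to being even. The bipartition will come from Menger's theorem: I will exhibit two internally disjoint paths between a suitable pair of nodes and color each node of $V_C$ by the path it lies on, arguing that every cycle edge is then bichromatic.

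\textbf{Localizing the cycle in one block.} Note first that $u$ and $w$ are themselves $uw$-cut-nodes, so $u,w\notin V_C$; moreover we may assume $u$ and $w$ lie in a common component $H$ of $G$, since otherwise every node is vacuously a $uw$-cut-node and the hypotheses are unsatisfiable. I would pass to the block-cut tree of $H$ and the sequence of blocks and proper cut-nodes along the $u$--$w$ path in it (the ``backbone''). The claim is that the two ends $a,b$ of every cycle edge lie in one common backbone block. Indeed, if $a$ lay off the backbone, no simple $uw$-path would meet $a$, so $\{a,b\}$ being a $uw$-separator would force $b$ to be a $uw$-cut-node, a contradiction; and if $a,b$ lay in distinct backbone blocks, then—using that an internal, non-cut node of a block does not separate that block's two poles—one could splice a $uw$-path avoiding both $a$ and $b$, again contradicting that $\{a,b\}$ separates $u$ from $w$. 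Since consecutive cycle edges share a node of $V_C$, which is not a cut-node and hence lies in a \emph{unique} block, all cycle edges lie in a single block $B$; let $p,q$ be the two poles (proper cut-nodes, or the terminals $u,w$) bounding $B$ along the backbone. The same splicing argument shows that each cycle edge $\{a,b\}$ is in fact a $pq$-separator \emph{within} $B$.

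\textbf{Finishing with Menger.} As $B$ is $2$-connected and contains $p,q$, there are two internally disjoint $pq$-paths $\pi_1,\pi_2$ meeting only in $p$ and $q$. Each cycle edge $\{a,b\}$ separates $p$ from $q$ in $B$, so $\pi_1$ contains one of $a,b$ and $\pi_2$ contains one of $a,b$; since $\pi_1,\pi_2$ share no internal node and $a,b\notin\{p,q\}$, exactly one of $a,b$ lies on $\pi_1$ and the other on $\pi_2$. Consequently every node of $V_C$ lies on exactly one of $\pi_1,\pi_2$, and every edge of $E_C$ joins a node of $\pi_1$ to a node of $\pi_2$. Coloring $V_C$ by the index of its path is thus a proper $2$-coloring of the cycle, so $(V_C,E_C)$ is bipartite and hence even.

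\textbf{Main obstacle.} I expect the localization step to be the crux: the cycle edges are arbitrary pairs in $\tbinom{V}{2}$ rather than edges of $G$, and the interplay between size-two separators $\{a,b\}$ and the genuine cut-nodes of $G$ must be controlled carefully—this is precisely what excludes odd configurations, whereas the Menger step is routine. Equivalently, one can package this reduction by passing to the auxiliary graph $G'$ of \Cref{figure:even-cycle}, obtained by eliminating each proper $uw$-cut-node (deleting it and completing its neighborhood into a clique), in which $u$ and $w$ become $2$-connected while the cycle edges remain $uw$-separators; one then applies Menger directly to $u$ and $w$.
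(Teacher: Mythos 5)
Your proof is correct, and it reaches the conclusion by a genuinely different reduction than the paper's, even though both hinge on Menger's theorem. The paper neutralizes the proper $uw$-cut-nodes globally: it builds an auxiliary graph $G'$ by deleting $C_{uw}(G)$ and joining any two surviving nodes that are connected in $G$ by a path whose internal nodes are all proper $uw$-cut-nodes; it then shows that the cycle edges remain proper $uw$-separators of $G'$, that $G'$ has no proper $uw$-cut-nodes, deduces from Menger's theorem that there are \emph{precisely two} internally node-disjoint $uw$-paths in $G'$, and concludes evenness by explicitly determining which cycle nodes lie on which path and deriving a parity contradiction from the edge $v_{n-1}v_0$. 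You instead localize: you pass to the block-cut tree, show every cycle edge has both ends internal to a single backbone block $B$ with poles $p,q$, show each cycle edge is a $pq$-separator inside $B$, and then $2$-color $V_C$ according to which of two internally disjoint $pq$-paths a node lies on; bipartiteness of the cycle then gives evenness. Your finish is arguably cleaner: you only need \emph{some} pair of internally disjoint paths (available from $2$-connectivity of $B$, regardless of adjacency of the poles), whereas the paper must argue that the minimum proper separator has size exactly two and that the two Menger paths exhaust the cycle; and your bipartiteness argument replaces the paper's index bookkeeping. As you yourself observe, the two reductions are morally equivalent, with your block $B$ and poles $p,q$ playing the role of the paper's $G'$ and $u,w$.

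One imprecision should be repaired. When you argue that consecutive cycle edges lie in a common block, you say the shared node of $V_C$ ``is not a cut-node and hence lies in a unique block.'' The hypothesis only excludes \emph{$uw$-cut-nodes}; a node of $V_C$ may perfectly well be a cut vertex of $G$ (e.g.\ with a pendant subtree attached) and thus lie in several blocks of $G$. What is true, and suffices for your argument, is that each node of $V_C$ lies in at most one \emph{backbone} block: two distinct backbone blocks intersect in at most one vertex, and such a shared vertex is necessarily a pole of the backbone, i.e.\ an element of $\set{u,w} \cup C_{uw}(G)$, none of which can belong to $V_C$. With this refinement, your localization step, and hence the whole proof, goes through.
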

\begin{proof}[Proof of \Cref{lemma:cycle-even-length}]
In a first step, we construct for any $u, w \in V$ and any cycle $C = (V_C,E_C)$ as defined in the lemma an auxiliary graph $G' = (V', E')$ by removing from $G$ the set $C_{uw}(G)$ of all proper $uw$-cut-nodes and connecting remaining nodes, which are connected in $G$ by a path of only proper $uw$-cut-nodes, by additional edges, i.e.
\begin{align*}
V' =\ & V \setminus C_{uw}(G)
\\
E' =\  
& \big\{
st \in \tbinom{V'}{2} \,\big|\, \exists (V_P,E_P) \in st\text{-paths}(G) \colon \\ &
\hspace{1ex} V_P \setminus st \subseteq C_{uw}(G) \big\} \cup \big(E \cap \tbinom{V'}{2}\big) 
\enspace .
\end{align*}
An example of this construction is shown in \Cref{figure:even-cycle}.

In a second step, we now show that $G'$ has the following properties:
\begin{enumerate}[label=\textnormal{(\roman*)}]
\item $V_C \cup \set{u,w} \subseteq V'$ and $E_C \subseteq \tbinom{V'}{2}$;
\label{item:G-prime-1}
\item there exist no proper $uw$-cut-nodes of $G'$;
\label{item:G-prime-2}
\item all $e \in E_C$ are proper $uw$-separators of $G'$.
\label{item:G-prime-3}
\end{enumerate}

Property \ref{item:G-prime-1} follows directly from the construction of the auxiliary graph $G'$.

Assume \ref{item:G-prime-2} does not hold.
Then there exists a $v \in C_{uw}(G')$.
It follows $v \not\in \set{u,w}$ and, by construction of $G'$, that $v \not\in C_{uw}(G)$.
Thus, $v$ is no $uw$-cut-node of $G$ and there exists a $uw$-path $(V_P, E_P)$ in $G$ such that $v \not\in V_P$.
By construction of $G'$, we can create a $uw$-path $(V_{P'}, E_{P'})$ in $G'$ with $v \not\in V_{P'}$ by replacing all subpaths of $(V_{P}, E_{P})$ whose internal nodes are in $C_{uw}(G)$ with edges in $E' \setminus E$.
The existence of such a $uw$-path $(V_{P'}, E_{P'})$ contradicts $v \in C_{uw}(G')$.

Assume \ref{item:G-prime-3} does not hold. 
Then there exists an $e \in E_C$ that is not a proper $uw$-separator of $G'$.
As $e \cap \set{u,w} = \emptyset$ by assumption, $e$ is also no $uw$-separator of $G'$.
Thus, there exists a $uw$-path $(V_{P'}, E_{P'})$ in $G'$ with $e \cap V_{P'} = \emptyset$.
By construction of $G'$, we can create a $uw$-path $(V_{P}, E_{P})$ in $G$ from $(V_{P'}, E_{P'})$ by replacing all edges in $E_{P'} \setminus E$ with paths in $G$ whose internal nodes are in $C_{uw}(G)$.
For this path $P$, it holds $e \cap V_P = \emptyset$ because $e \cap V_{P'} = \emptyset$ (see above) and $e \cap C_{uw}(G) = \emptyset$ (by assumption).
The existence of such a $uw$-path $(V_{P}, E_{P})$ contradicts $e$ being a $uw$-separator of $G$.

In a third step, we now prove that $C$ is even:
Menger's theorem \citep{menger-1927} states that for two distinct non-adjacent nodes $a,b \in V'$, the number of internally node-disjoint $ab$-paths in $G'$ is equal to the minimal size of proper $ab$-separators of $G'$.
By \ref{item:G-prime-1}, $u$ and $w$ are in $V'$.
Furthermore, they are distinct and non-adjacent in $G'$, as otherwise every $uw$-separator of $G'$ would contain $u$ or $w$, in contradiction to the elements of $E_C$ being proper $uw$-separators of $G'$ by \ref{item:G-prime-3}.
As $C_{uw}(G') = \emptyset$ by \ref{item:G-prime-2}, and all edges in $E_C$ are proper $uw$-separators of $G'$ by \ref{item:G-prime-3}, the minimal size of proper $uw$-separators of $G'$ is two.
Thus, there exist precisely two internally node-disjoint $uw$-paths $P_1 = (V_{P_1}, E_{P_1})$ and $P_2 = (V_{P_2}, E_{P_2})$ in $G'$, by Menger's theorem.

W.l.o.g., we enumerate the nodes in the cycle $(V_C, E_C)$:
For $n := |V_C|$, let $v \colon \mathbb{Z}_n \to V_C$ such that $E_C = \{ v_j v_{j+1} \mid j \in \mathbb{Z}_n \}$.
As $v_0 v_1$ is a $uw$-separator of $G'$ by \ref{item:G-prime-3}, the paths $P_1$ and $P_2$ each contain $v_0$ or $v_1$. 
Moreover, as these paths are internally node-disjoint, precisely one of them contains $v_0$, the other $v_1$.
Assume w.l.o.g.~that $v_0 \in V_{P_1}$ and $v_1 \in V_{P_2}$.
By \ref{item:G-prime-3}, any $v_j v_{j+1} \in E_C$ with $j \in \set{1,\ldots,n-2}$ is a $uw$-separator of $G'$.
Thus:
\begin{align*}
V_{P_1} \cap V_C &= \bigl\{v_{2j} \mid j \in \{0, \ldots, \lfloor \tfrac{n-1}{2} \rfloor\}\bigr\}\\
V_{P_2} \cap V_C &= \bigl\{v_{2j+1} \mid j \in \{0, \ldots, \lfloor \tfrac{n-2}{2} \rfloor\}\bigr\} \enspace .
\end{align*}
If $C$ were odd, $n$ would be odd.
Thus, $n-1$ would be even.
Consequently, it would follow that ${v_{n-1}v_0 \cap V_{P_2} = \emptyset}$, in contradiction to $v_{n-1}v_0$ being a $uw$-separator of $G'$ by \ref{item:G-prime-3}.
Thus, $C$ must be even.
\end{proof}

\begin{proof}[Proof of \Cref{theorem:box-inequalities}]
Assume there exists a path or cycle $H = (V_H, E_H)$ of $uw$-separators of $G$ as defined in the theorem. 

W.l.o.g., fix enumerations of the nodes and edges of $H$ as follows:
Let $n := |E_H|$.
If $H$ is a path, let $v \colon \{0, \ldots, n\} \to V_H$ and $e \colon \{0, \ldots, n-1\} \to E_H$ such that $\forall j \in \{0, \ldots, n-1\} \colon e_j = v_j v_{j+1}$ and $E_H = {\{ e_j \mid j \in \{0, \ldots, n-1\}\}}$.
If $H$ is a cycle, let $v \colon \mathbb{Z}_n \to V_H$ and $e \colon \mathbb{Z}_n \to E_H$
such that $\forall j \in \mathbb{Z}_n \colon e_j = v_j v_{j+1}$ and $E_H = \{ e_j \mid j \in \mathbb{Z}_n \}$.
If $H$ is a cycle containing $uw$-cut-nodes of $G$, assume further and w.l.o.g.~that $v_0 = v_n$ is such a $uw$-cut-node.
Finally, consider the partition $\set{E_0, E_1}$ of $E_H$ into even and odd edges, i.e.
\begin{align*}
E_0 & = \left\{ e_{2j} \in E_H \;\middle|\; j \in \{0, \ldots, \lfloor \tfrac{n-1}{2} \rfloor \right\} 
\\
E_1 & = \left\{ e_{2j+1} \in E_H \;\middle|\; j \in \{0, \ldots, \lfloor \tfrac{n-2}{2} \rfloor \right\}
\enspace .
\end{align*}
We will prove that $\sigma=\set{x \in X_{G \widehat{G}} \mid x_{uw} = 0}$ is not a facet of $\Xi_{G \widehat{G}}$ by showing that all $x \in \sigma$ satisfy the additional orthogonal equality
\begin{align}
0 = \sum_{j \in \set{0, \ldots, n-1}} (-1)^j \, x_{e_j}
\enspace .
\label{equation:equality}
\end{align}
More specifically, we will prove for every $x \in \sigma$ the existence of a bijection 
\begin{equation*}
\vartheta_x \colon\; E_0 \cap x^{-1}(1) \;\rightarrow\; E_1 \cap x^{-1}(1)
\enspace .
\end{equation*}
Using these bijections, we conclude for every $x \in \sigma$ that the number of elements in the sum of (\ref{equation:equality}) taking the value $+1$ is equal to the number of elements taking the value $-1$, and thus that the equality holds.

We now show that these bijections exist.
Let $x \in \sigma$. 
As $x_{uw}=0$, the decomposition of $G$ induced by $x$ has a component containing both $u$ and $w$. 
Let $V_{uw}$ be the node set of that component.

If $n = 1$, then $H$ is a path $(\set{v_0,v_1}, \set{e_0})$. Thus, $E_1 \cap x^{-1}(1) = \emptyset$ because $E_1 = \emptyset$.
Moreover, $E_0 \cap x^{-1}(1) = \emptyset$ as $v_0$ and $v_1$ are $uw$-cut-nodes of $G$ and thus elements of $V_{uw}$, which implies $x_{e_0} = 0$.
In this case, $\vartheta_x = \emptyset$ and \eqref{equation:equality} specializes to $x_{e_0} = 0$, which is satisfied.

We now consider $n \geq 2$.
For every $e_j = v_j v_{j+1} \in E_0 \cap x^{-1}(1)$, we define:
\begin{equation}
	\vartheta_x(e_j) =
	\begin{cases}
		e_{j-1} & \text{if}\ v_j \not\in V_{uw} \\
		e_{j+1} & \text{if}\ v_{j+1} \not\in V_{uw}
	\end{cases}
	\enspace .
\end{equation}

We show that $\vartheta_x$ is well-defined:
Let $e_j \in E_0 \cap x^{-1}(1)$.
In general, at least one of $v_j$ and $v_{j+1}$ is not in $V_{uw}$ because $x_{e_j} = 1$, and at most one of $v_j$ and $v_{j+1}$ is not in $V_{uw}$ because $e_j$ is a $uw$-separator of $G$.
Thus, $\vartheta_x$ assigns $e_j$ a unique element, it remains to show that this element is in $E_1 \cap x^{-1}(1)$.

Firstly, we show $\vartheta_x(e_j) \in E_1$.
Clearly, it holds for $j \in \set{1, \ldots, n-2}$, that $\vartheta_x(e_j) \in E_1$.
We regard the remaining cases of $j \in \{0, n-1\}$. 
Let first $j = 0$.
For $H$ a path or cycle with $uw$-cut-node, $\vartheta_x(e_0) = e_1 \in E_1$ because $v_0 \in V_{uw}$ as $v_0$ is a $uw$-cut-node of $G$.
For $H$ a cycle without $uw$-cut-node, we distinguish $v_0 \in V_{uw}$ and $v_0 \not\in V_{uw}$.
If $v_0 \in V_{uw}$, then $\vartheta_x(e_0) = e_1 \in E_1$.
If $v_0 \not\in V_{uw}$, then $\vartheta_x(e_0) = e_{n-1} \in E_1$ because $n-1$ is odd by \Cref{lemma:cycle-even-length}.
Let now $j = n-1$.
For $H$ a path or cycle with $uw$-cut-node, $\vartheta_x(e_{n-1}) = e_{n-2} \in E_1$ because $v_{n} \in V_{uw}$ as $v_{n}$ is a $uw$-cut-node of $G$.
For $H$ a cycle without $uw$-cut-node, $e_{n-1} \not\in E_0 \cap x^{-1}(1)$ because $n-1$ is odd by \Cref{lemma:cycle-even-length}.

Secondly, we show $x_{\vartheta_x(e_j)} = 1$.
By definition of $\vartheta_x$, $e_j$ and $\vartheta_x(e_j)$ share a node $v \notin V_{uw}$.
As $\vartheta_x(e_j)$ is a $uw$-separator of $G$, the other node of $\vartheta_x(e_j)$ is in $V_{uw}$ and therefore $x_{\vartheta_x(e_j)} = 1$.
Thus, it holds $\vartheta_x(e_j) \in E_1 \cap x^{-1}(1)$, and $\vartheta_x$ is well-defined.

We show that $\vartheta_x$ is surjective:
Let $e_j \in E_1 \cap x^{-1}(1)$.
As $x_{e_j}=1$, either $v_j \not\in V_{uw}$ or $v_{j+1} \not\in V_{uw}$.
If $v_j \not\in V_{uw}$, then $e_{j-1} \in E_0 \cap x^{-1}(1)$ and $\vartheta_x(e_{j-1}) = e_j$.
If $v_{j+1} \not\in V_{uw}$, then $e_{j+1} \in E_0 \cap x^{-1}(1)$ and $\vartheta_x(e_{j+1}) = e_j$.
Thus, $\vartheta_x$ is surjective.

We show that $\vartheta_x$ is injective:
Assume $\vartheta_x$ is not injective.
Then there exists a $j \in \set{0, \ldots, n-1}$ such that $e_j \in E_1 \cap x^{-1}(1)$ and $e_{j-1}, e_{j+1} \in E_0 \cap x^{-1}(1)$ such that $\vartheta_x(e_{j-1}) = e_j = \vartheta_x(e_{j+1})$, by definition of $\vartheta_x$.
This implies $v_j,v_{j+1} \not\in V_{uw}$, which contradicts $e_j$ being a $uw$-separator.
By this contradiction, $\vartheta_x$ is injective.

Altogether, we have shown that $\vartheta_x$ is well-defined, surjective and injective, and thus a bijection.
This concludes the proof of necessity.

Assume now that \ref{theorem:condition:path} and \ref{theorem:condition:cycle} are satisfied.
We prove that ${0 \leq x_{uw}}$ is facet-defining by constructing ${\lvert E \cup F \rvert - 1}$ linearly independent vectors in the linear space $\lin(\set{x-y \mid x,y \in \sigma})$ which we abbreviate by $\lin(\sigma-\sigma)$, implying $\dim \aff \sigma = \dim \lin(\sigma - \sigma) = \lvert E \cup F \rvert - 1$. In particular, we construct the characteristic vectors of all $st \in E \cup F \setminus \set{uw}$.
For this construction, we distinguish the following cases:
\begin{enumerate}
\item $st$ is not a $uw$-separator of $G$;
\item $st$ is a $uw$-separator of $G$ and neither $s$ nor $t$ is a $uw$-cut-node;
\item precisely one node of $st$ is a $uw$-cut-node.
\end{enumerate}
Note that no $st \in E \cup F \setminus \set{uw}$ is such that both $s$ and $t$ are $uw$-cut-nodes, as otherwise the path $(\set{s,t}, \set{st})$ would violate \ref{theorem:condition:path}.
Thus, this distinction of cases is complete.

For the first case, let $st \in E \cup F \setminus \set{uw}$ such that $st$ is not a $uw$-separator of $G$.
By this property, there exists a $uw$-path $(V_{P_{uw}}, E_{P_{uw}})$ in $G$ that contains neither $s$ nor $t$.
Let further $(V_{P_{st}}, E_{P_{st}})$ be an $st$-path in $G$.
If $G[V_{P_{uw}} \cup V_{P_{st}}]$ is not connected, we define:
\begin{gather}
\begin{align*}
	V_1 &= \{V_{P_{uw}}, V_{P_{st}}\} &
	V_2 &= \{V_{P_{uw}}, V_{P_{st}} \setminus \set{s, t}\} \\ 
	V_3 &= \{V_{P_{uw}}, V_{P_{st}} \setminus \set{s}\} &
	V_4 &= \{V_{P_{uw}}, V_{P_{st}} \setminus \set{t}\}
	\enspace .
\end{align*}
\end{gather}
Otherwise, we define:
\begin{gather}
\begin{align*}
	V_1 &= \{V_{P_{uw}} \cup V_{P_{st}}\} &
	V_2 &= \{V_{P_{uw}} \cup V_{P_{st}} \setminus \set{s, t}\} \\ 
	V_3 &= \{V_{P_{uw}} \cup V_{P_{st}} \setminus \set{s}\} &
	V_4 &= \{V_{P_{uw}} \cup V_{P_{st}} \setminus \set{t}\}
	\enspace .
\end{align*}
\end{gather}
In both cases, it is easy to see for $i \in \set{1, \ldots, 4}$ that $G[U]$ is connected for all $U \in V_i$ and thus $x^{V_i} \in X_{G\widehat{G}}$, by \Cref{lemma:construction}.
It further holds, $\mathbbm{1}_\set{st} = -x^{V_1}-x^{V_2}+x^{V_3}+x^{V_4}$ and $x^{V_i}_{uw} = 0$, as for all $pq \in E \cup F$:
\begin{itemize}
	\item $x^{V_i}_{pq} = 1 \ \text{for} \ i = 1,\ldots, 4 \ 
	\text{if} \ \nexists U \in V_1\colon \set{p,q} \subseteq U$
	\item $x^{V_i}_{pq} = 0 \ \text{for} \ i = 1,\ldots, 4 \ 
	\text{if} \ \exists U \in V_2\colon \set{p,q} \subseteq U$
	\item $x^{V_i}_{pq} = 0 \ \text{for} \ i = 1,3 \ \text{and} \ x^{V_i}_{pq} = 1 \ \text{for} \ i = 2,4 \\ 
	\text{if} \ s \in \set{p,q}\,, t \not\in \set{p,q} \;\text{and}\; \exists U \in V_1\colon \set{p,q} \subseteq U$
	\item $x^{V_i}_{pq} = 0 \ \text{for} \ i = 1,4 \ \text{and} \ x^{V_i}_{pq} = 1 \ \text{for} \ i = 2,3 \\ 
	\text{if} \ t \in \set{p,q}\,, s \not\in \set{p,q} \;\text{and}\; \exists U \in V_1\colon \set{p,q} \subseteq U$
	\item $x^{V_1}_{pq} = 0 \ \text{and} \ x^{V_i}_{pq} = 1 \ \text{for} \ i = 2,3,4 \ \text{if} \ \set{p,q} = \set{s,t}$\,.
\end{itemize}
It follows from $x^{V_i}_{uw} = 0$ that  $x^{V_i} \in \sigma$. 
Thus, $\mathbbm{1}_\set{st} = -x^{V_1}-x^{V_2}+x^{V_3}+x^{V_4} \in \lin(\sigma - \sigma)$, which concludes the first case.

For the second case, consider the set $H$ of all $st \in E \cup F \setminus \set{uw}$ such that $st$ is a $uw$-separator of $G$ and neither $s$ nor $t$ is a $uw$-cut-node of $G$.
Let $st \in H$ and let $v \in st$.
As $v$ is no $uw$-cut-node, there exists a $uw$-path $(V_{P_{uw}}, E_{P_{uw}})$ in $G$ that does not contain $v$.
Let further $(V_{P_{st}}, E_{P_{st}})$ be an $st$-path in $G$ and let $P = (V_P, E_P) = (V_{P_{uw}} \cup V_{P_{st}}, E_{P_{uw}} \cup E_{P_{st}})$.
With $E_{\widehat{G}}(P,v) = \set{vv' \in (E \cup F) \cap \tbinom{V_P}{2} }$ denoting the set of edges of $\widehat{G}$ containing $v$ whose nodes are in $V_P$, we first show that $\mathbbm{1}_{E_{\widehat{G}}(P,v)} \in \lin(\sigma - \sigma)$.
If $G[V_{P_{uw}} \cup V_{P_{st}}]$ is not connected, we define:
\begin{equation*}
	V_1 = \{V_{P_{uw}}, V_{P_{st}}\} \quad
	V_4 = \{V_{P_{uw}}, V_{P_{st}} \setminus \set{t}\} 
	\enspace .
\end{equation*}
Otherwise, we define: 
\begin{equation*}
	V_1 = \{V_{P_{uw}} \cup V_{P_{st}}\} \quad 
	V_4 = \{V_{P_{uw}} \cup V_{P_{st}} \setminus \set{t}\} 
	\enspace .
\end{equation*}
Analogously to the previous case, we get $\mathbbm{1}_{E_{\widehat{G}}(P,v)} = {-x^{V_1}+x^{V_4}} \in \lin(\sigma - \sigma)$.
Denoting by $E_{\widehat{G}}(v) = \set{vv' \in (E \cup F)}$ the set of edges of $\widehat{G}$ containing $v$ and noting that
\begin{equation}\label{equation:adjacent-edges}
		\mathbbm{1}_\set{st} = \mathbbm{1}_{E_{\widehat{G}}(P,v)} - \sum_{e \in E_{\widehat{G}}(P,v)\setminus \set{st}} \mathbbm{1}_\set{e}
\end{equation}
and $E_{\widehat{G}}(P,v) \subseteq E_{\widehat{G}}(v)$, we see that it is sufficient for proving $\mathbbm{1}_\set{st} \in \lin(\sigma - \sigma)$ to show that there exists a node $v \in st$ such that $\mathbbm{1}_\set{e} \in \lin(\sigma - \sigma)$ for all $e \in E_{\widehat{G}}(v) \setminus \set{st}$.

Next, we define a sequence $\set{H_j}_{j \in \mathbb{N}_0}$ of subsets of $H$ (for an example see \Cref{figure:Hj}) and show iteratively that the characteristic vectors of their elements are in $\lin(\sigma - \sigma)$ using \eqref{equation:adjacent-edges}.
For any $j \in \mathbb{N}_0$, we define:
\begin{align}\label{equation:definition-Hi}
	\begin{split}
		H_j = \bigl\{
			& st \in H \setminus \cup_{k<j} H_k \mid \exists v \in st\,\forall e \in E_{\widehat{G}}(v) \setminus \set{st} \colon\\ 
			& e \ \text{is no $uw$-separator of $G$} \vee\
			e \in \cup_{k<j} H_k
			\bigr\} \enspace .
	\end{split}
\end{align}
By this definition, for any $st \in H_0$ there exists a $v \in st$ such that all $e \in E_{\widehat{G}}(v) \setminus \set{st}$ are no $uw$-separators of $G$. Thus, it follows from the previous case that $\mathbbm{1}_\set{e} \in \lin(\sigma - \sigma)$ for all $e \in E_{\widehat{G}}(v) \setminus \set{st}$. Consequently, $\mathbbm{1}_\set{st} \in \lin(\sigma - \sigma)$ by \eqref{equation:adjacent-edges}.
Let now $j>0$ and assume that the characteristic vectors of all elements in $\cup_{k<j} H_k$ are in $\lin(\sigma - \sigma)$.
By definition, for any $st \in H_j$ there exists a $v \in st$ such that any $e \in E_{\widehat{G}}(v) \setminus \set{st}$ is either no $uw$-separator of $G$ and thus $\mathbbm{1}_\set{e} \in \lin(\sigma - \sigma)$ by the previous case, or is in $\cup_{k<j} H_k$ and thus ${1}_\set{e} \in \lin(\sigma - \sigma)$ by assumption.
Consequently, $\mathbbm{1}_\set{st} \in \lin(\sigma - \sigma)$ by \eqref{equation:adjacent-edges}.

For completing the second case, it remains to show that we have constructed the characteristic vectors of all elements in $H$ by this, i.e. that $H \subseteq \cup_{j \geq 0} H_j$.
This follows directly from \Cref{claim:H-partition}, which is proven in the appendix.
\begin{claim}\label{claim:H-partition}
	If \ref{theorem:condition:path} and \ref{theorem:condition:cycle} are satisfied, the set $\set{H_j \mid j \in \mathbb{N}_0 \wedge H_j \neq \emptyset}$ is a partition of $H$.
\end{claim}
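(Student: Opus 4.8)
The plan is to check the two defining properties of a partition in turn, treating disjointness as routine and the covering property as the crux. Pairwise disjointness and nonemptiness are immediate from \eqref{equation:definition-Hi}: the listed sets are nonempty by the condition $H_j \neq \emptyset$, and for $k < j$ we have $H_j \subseteq H \setminus \bigcup_{l<j} H_l \subseteq H \setminus H_k$, so $H_j \cap H_k = \emptyset$. Since also $\bigcup_{j \geq 0} H_j \subseteq H$ trivially, the entire content of the claim reduces to the reverse inclusion $H \subseteq \bigcup_{j \geq 0} H_j$, which is the step I expect to be the main obstacle.

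To set this up, write $H_\ast := \bigcup_{j \geq 0} H_j$. As $H \subseteq E \cup F$ is finite, the non-decreasing sets $\bigcup_{k<j} H_k$ stabilize, so some stage $j$ satisfies $\bigcup_{k<j} H_k = H_\ast$. For that stage, any $st \in H \setminus H_\ast$ meets the membership requirement $st \in H \setminus \bigcup_{k<j} H_k$ yet fails to lie in $H_j$; unpacking \eqref{equation:definition-Hi}, this means that for every $v \in st$ there is an edge $e \in E_{\widehat{G}}(v) \setminus \set{st}$ that is a $uw$-separator of $G$ and satisfies $e \notin H_\ast$. Calling such an edge \emph{free}, the key property is: every endpoint of every edge in $H_\infty := H \setminus H_\ast$ is incident to a free edge distinct from that edge. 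Note too that $u$ and $w$ are themselves $uw$-cut-nodes, so they can never be endpoints of an edge in $H$.

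The plan is then to argue by contradiction. Assuming $H_\infty \neq \emptyset$, I fix $st \in H_\infty$ and grow a simple path in $\widehat{G}$, all of whose edges are $uw$-separators of $G$, starting from the free edge $st$ and extending only at endpoints that are not $uw$-cut-nodes. The engine of the construction is that such a non-cut endpoint $v$ can always be extended: its terminal path-edge $v'v$ joins the two non-cut-nodes $v', v$, hence lies in $H$, and being free it lies in $H_\infty$; the key property then supplies a free $uw$-separator edge $vv'' \neq v'v$, which I append. Because $v \notin \set{u,w}$, this appended edge is never the forbidden edge $uw$, and its far endpoint $v''$ is either a new node or an old one.

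Finally, I would analyse termination. Each extension either appends a genuinely new node, or reaches an old path-node and thereby closes a simple cycle of $uw$-separators of $G$, contradicting \ref{theorem:condition:cycle}. Since $V$ is finite, new nodes cannot be appended forever, so in the absence of a cycle the process must halt with both endpoints being $uw$-cut-nodes; as the endpoints $s, t$ are not cut-nodes, this yields a simple path of length at least three, hence different from $(\set{u,w}, \set{uw})$, whose end-nodes are $uw$-cut-nodes and whose edges are $uw$-separators, contradicting \ref{theorem:condition:path}. Either outcome is a contradiction, so $H_\infty = \emptyset$, giving $H \subseteq H_\ast$ and completing the proof. The delicate point throughout is exactly this exhaustiveness argument: showing that a non-cut endpoint is always extendable and that finiteness forces one of the two configurations forbidden by \ref{theorem:condition:path} and \ref{theorem:condition:cycle}.
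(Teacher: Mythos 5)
Your proof is correct and takes essentially the same route as the paper's: both unpack the definition \eqref{equation:definition-Hi} at the stabilization stage to conclude that every endpoint of an edge in $H_\infty$ has an incident $uw$-separator edge outside $\bigcup_{j\geq 0} H_j$, and both reach a contradiction by producing either a simple cycle of $uw$-separators (violating \ref{theorem:condition:cycle}) or a cut-node-ended path of $uw$-separators (violating \ref{theorem:condition:path}). The only difference is presentational: the paper fixes a maximum-length simple path with all edges in $H_\infty$ and argues by extremality, whereas you grow such a path greedily and invoke finiteness of $V$ for termination.
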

This concludes the second case.

For the last case, let $st \in E \cup F \setminus \set{uw}$ such that precisely one node of $st$, say $t$, is a $uw$-cut-node.
We show $\mathbbm{1}_{E_{\widehat{G}}(s)} \in \lin(\sigma - \sigma)$ analogously to the previous case and again have
\begin{align}\label{eq:adjacent-edges-s}
\mathbbm{1}_\set{st} = \mathbbm{1}_{E_{\widehat{G}}(s)} - \sum_{e \in E_{\widehat{G}}(s)\setminus \set{st}} \mathbbm{1}_\set{e} 
\enspace .
\end{align}
For any $e = s's \in E_{\widehat{G}}(s)\setminus \set{st}$, $s'$ is no $uw$-cut-node of $G$, as otherwise the path $(\set{s',s,t},\set{s's, st})$ would violate \ref{theorem:condition:path}.
Consequently, $\mathbbm{1}_\set{e} \in \lin(\sigma - \sigma)$ by the previous two cases.
It follows from \eqref{eq:adjacent-edges-s} that $\mathbbm{1}_\set{st} \in \lin(\sigma - \sigma)$, which concludes the third case.
Altogether, we have constructed $\lvert E \cup F \rvert -1$ linearly independent vectors in $\lin(\sigma -\sigma)$ and have thus established sufficiency of the specified conditions.
\end{proof}

\begin{figure}[t]
	\centering
    \resizebox{.4\textwidth}{!}{
        \begin{tikzpicture}
    \node[vertex, label=left:$u$] (u) at (0, 0) {};
    \node[vertex, label=below:$v_0$] (v0) at (1, -0.75) {};
    \node[vertex, label=above:$v_1$] (v1) at (1, 0.75) {};
    \node[vertex, label=below:$v_2$] (v2) at (2.5, -0.75) {};
    \node[vertex, label=above:$v_3$] (v3) at (2.5, 0.75) {};
    \node[vertex, label=below:$v_4$] (v4) at (4, -0.75) {};
    \node[vertex, label=above:$v_5$] (v5) at (4, 0.75) {};
    \node[vertex, label=right:$w$] (w) at (5, 0) {};

    \node[vertex] (u0) at (0.5, -0.375) {};
    \node[vertex] (u1) at (0.5, 0.375) {};
    \node[vertex] (u2) at (4.5, -0.375) {};
    \node[vertex] (u3) at (4.5, 0.375) {};

    \draw (u) -- (u0) -- (v0);
    \draw (u) -- (u1) -- (v1);
    \draw (v1) -- (v3);
    \draw (v0) -- (v2);
    \draw[mygreen][thick] (v0) -- (v1);
    \draw[dashed][mygreen][thick] (v2) -- (v3);
    \draw[mygreen][thick] (v4) -- (v5);
    \draw[dashed][myred][thick] (v1) -- (v2);
    \draw[dashed][myred][thick] (v3) -- (v4);
    \draw (v2) -- (v4);
    \draw (v3) -- (v5);
    \draw (v4) -- (u2) -- (w);
    \draw (v5) -- (u3) -- (w);

    \draw[dashed] (u) .. controls +(0,2) and +(0,2) .. (w) node {};
\end{tikzpicture}
    }
	\caption{
        Depicted above is an example of a graph $G$ (solid edges) and augmentation $\widehat{G}$ (dashed edges) that fulfills the conditions of \Cref{theorem:box-inequalities} for $0 \leq x_{uw}$. 
        Essential for the sufficiency proof of this theorem is that the introduced edge sets $H$ and $H_j$ for $j \in \mathbb{N}_0$ have the property $H \subseteq \cup_{j \geq 0} H_j$.
        In the given example, $H = \set{v_0v_1, v_1v_2, v_2v_3, v_3v_4, v_4v_5}$, $H_0 = \set{v_0v_1, v_4v_5}$, $H_1 = \set{v_1v_2, v_3v_4}$, $H_2 = \set{v_2v_3}$, and $H_j = \emptyset$ for $j \geq 3$.
        Thus, $H \subseteq \cup_{j \geq 0} H_j$.
		}
	\label{figure:Hj}
\end{figure}
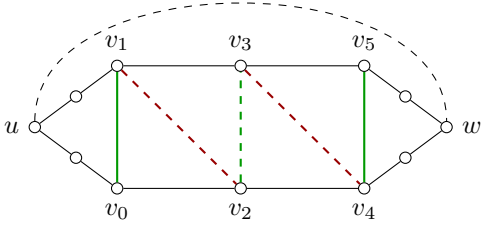

\section{\textsc{NP}-Hardness of Deciding Cut Facets}
\label{section:cut-facets}

\begin{figure}[t]
\centering
\begin{tikzpicture}[scale=0.72]\small
    \node[vertex, label=left:$x_u$] (u) at (0, 0) {};
    \node[vertex, label=above:$\neg\,x_1$] (v1) at (1, 0.75) {};
    \node[vertex, label=below:$x_2$] (v2) at (1, 0) {};
    \node[vertex, label=below:$x_3$] (v3) at (1, -0.75) {};
    \node[vertex, label=below:$x_{d_1}$] (d1) at (2, 0) {};
    \node[vertex, label=below:$x_{d_2}$] (d2) at (3.5, 0) {};
    \node[vertex, label=90:$x_1$] (v4) at (4.5, 0.75) {};
    \node[vertex, label=below:$\neg\,x_1$] (v5) at (4.5, -0.75) {};
    \node[vertex, label=above:$x_2$] (v6) at (5.5, 0.75) {};
    \node[vertex, label=below:$\neg\,x_2$] (v7) at (5.5, -0.75) {};
    \node[vertex, label=above:$x_3$] (v8) at (6.5, 0.75) {};
    \node[vertex, label=below:$\neg\,x_3$] (v9) at (6.5, -0.75) {};
    \node[vertex, label=below:$x_w$] (w) at (7.5, 0) {};
    \node[vertex, label=right:$x_{w'}$] (v_prime) at (8.5, 0) {};

    \draw (u) -- (v1) -- (d1);
    \draw (u) -- (v2) -- (d1);
    \draw[mygreen][thick] (u) -- (v3) -- (d1);

    \draw (d2) -- (v4);
    \draw (v5) -- (d2);
    \draw (v4) -- (v6) -- (v5) -- (v7) -- (v4);
    \draw (v6) -- (v8) -- (v7) -- (v9) -- (v6);
    \draw (v8) -- (w) -- (v9);
    \draw[mygreen][thick] (d2) -- (v4) -- (v7) -- (v8) -- (w);

    \draw[style=cut-edge][myred][thick] (d1) to node[below] {$d$} (d2);
    \draw[style=cut-edge][myred][thick] (v1)[bend left] to node {} (v4);
    \draw[style=cut-edge][myred][thick] (v2)[bend right=45] to node {} (v7);
    \draw[style=cut-edge][myred][thick] (v3)[bend right=45] to node {} (v9);
    
    \draw[style=cut-edge][myred][thick] (d1)[out=90, in=75] to (v_prime);
    \draw(v5)[out=-45, in=-90] to (v_prime);
    \draw(v7)[out=-45, in=-108] to (v_prime);
    \draw(v9)[out=0, in=-135] to (v_prime);
    \draw(v4)[out=45, in=90] to (v_prime);
    \draw(v6)[out=50, in=108] to (v_prime);
    \draw(v8)[out=0, in=135] to (v_prime);
    \draw(w) -- (v_prime);

    \draw[dashed][myred][thick] (u)[bend left=90] to node[pos=0.42,above] {$f$} (w);
\end{tikzpicture}
\caption{Depicted above is an example of the reduction from \textsc{3-sat} used in the proof of \Cref{lemma-complexity-fd}. 
Graphs $G$ and $\widehat{G}$ are constructed from the instance of the \textsc{3-sat} problem given by $\neg \, x_1 \lor x_2 \lor x_3$. 
The additional edge $f$ as well as the edges in the $f$-cut $\delta$ are depicted in red.
The $f_d$-path with respect $\delta$, given by the green edges and $d$, corresponds to the solution of the \textsc{3-sat} problem instance: $\varphi(x_1) = \textsc{false}, \, \varphi(x_2) = \textsc{false} \ \text{and} \ \varphi(x_3) = \textsc{true}$.}
\label{fig:example-np-hardness}
\end{figure}
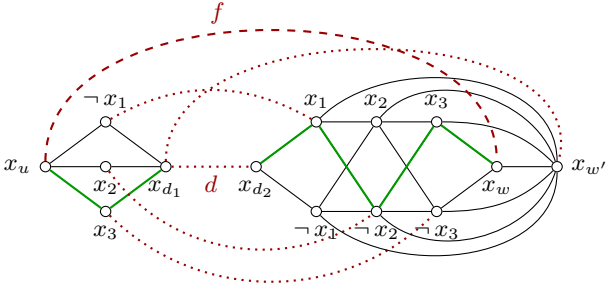

In this section, we prove that it is \textsc{np}-hard to decide facet-definingness of cut inequalities \eqref{eq:cut} for lifted multicut polytopes.
We do so in two steps:
Firstly, we establish a necessary and sufficient condition for facet-definingness of cut inequalities for lifted multicut polytopes in the special case $\lvert F \rvert = 1$ (\Cref{lemma-sufficiency-fd}).
Secondly, we show that deciding this condition for these specific lifted multicut polytopes is \textsc{np}-hard (\Cref{lemma-complexity-fd}).
Together, this implies that facet-definingness is \textsc{np}-hard to decide for cut inequalities of general lifted multicut polytopes (\Cref{theorem:np-hardness}).

We begin by introducing a structure fundamental to this discussion, paths crossing a cut in precisely one edge that have no other edge of the cut as chord:
\begin{definition}
	For 
	any connected graph $G=(V,E)$, 
	any augmentation $\widehat{G} = (V, E \cup F)$ with $F \cap E = \emptyset$, 
	any $f \in F$,
	any $f$-cut $\delta$ of $G$ and
	any $d \in \delta$,
	we call an $f$-path $(V_P, E_P)$ in $G$ an $f_d$-path in $G$ with respect to $\delta$ if and only if it holds for all $d' \in \delta \setminus \set{d}$: $d' \not\subseteq V_P$.
\end{definition}
We proceed by stating the two lemmata and the theorem in terms of $f_d$-paths. 
\begin{lemma}\label{lemma-sufficiency-fd}
	For any connected graph $G=(V,E)$,
	any augmentation $\widehat{G} = (V, E \cup F)$ with $F \cap E = \emptyset$, 
	any $f \in F$ and
	any $f$-cut $\delta$ of $G$,
	it is necessary for the cut inequality $1-x_f \leq \sum_{e \in \delta} (1-x_e)$ to be facet-defining for $\Xi_{G \widehat{G}}$ that an $f_d$-path in $G$ with respect to $\delta$ exists for all $d \in \delta$.
	For the special case of $F = \set{f}$, this condition is also sufficient.
\end{lemma}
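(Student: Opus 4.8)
\textbf{Necessity.} The plan is to prove the contrapositive: suppose there exists some $d \in \delta$ admitting no $f_d$-path in $G$ with respect to $\delta$; I would then exhibit a nontrivial linear equality satisfied by every point of the face $\sigma = \{x \in X_{G\widehat{G}} \mid 1 - x_f = \sum_{e \in \delta}(1 - x_e)\}$, showing the face is not a facet. The key observation is that the cut inequality is tight exactly when a feasible $x$ \emph{crosses} the cut $\delta$ in precisely one edge on the $u$--$w$ side, i.e.\ $x_f = 1$ and $|\{e \in \delta \mid x_e = 1\}| = 1$, or $x_f = 0$ and all $x_e = 0$ for $e \in \delta$. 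For any $x \in \sigma$ with $x_f = 1$ and the single cut edge equal to $d$, the nodes of $d$ must lie in the same component as the two sides of $f$ respectively; the nonexistence of an $f_d$-path means this configuration forces additional edges of $\delta$ to be cut or forces some further equality among the $x_e$. I would make this precise by showing $x_d = 0$ for all $x \in \sigma$ (the edge $d$ can never be the unique cut edge), yielding the orthogonal equality $x_d = \text{const}$ on $\sigma$, hence $\sigma$ is contained in a proper subface and cannot be a facet.

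\textbf{Sufficiency (case $F = \{f\}$).} Here I would show facet-definingness by constructing $|E| + 1 - 1 = |E|$ affinely independent points in $\sigma$ (recall $|E \cup F| = |E| + 1$ and $\dim \Xi_{G\widehat{G}} = |E \cup F|$, so a facet has dimension $|E|$). The natural building blocks are, for each $d \in \delta$, the feasible solution $x^d$ that cuts $\delta$ in exactly the edge $d$ along a fixed $f_d$-path, together with solutions that isolate individual non-cut edges. Concretely, for every $e \in E \setminus \delta$ I would produce a pair of points in $\sigma$ differing only in coordinate $x_e$, using Lemma~\ref{lemma:construction} to build the two decompositions (merging or splitting the endpoints of $e$ without disturbing the crossing pattern on $\delta$), thereby placing $\mathbbm{1}_{\{e\}} \in \lin(\sigma - \sigma)$. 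For the cut edges themselves, the guaranteed existence of an $f_d$-path for \emph{every} $d \in \delta$ lets me realize each $\mathbbm{1}_{\{d\}}$ direction by switching which single cut edge is active, giving the remaining independent directions.

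\textbf{Main obstacle.} The hard part will be the sufficiency direction, specifically showing that the $f_d$-paths for the various $d \in \delta$ yield genuinely \emph{linearly independent} difference vectors spanning the cut-edge coordinates, rather than collapsing onto one another. The delicate point is that two $f_d$-paths for distinct $d, d'$ may share many edges, so naive differences need not be independent; I expect to handle this by ordering the cut edges and constructing, for each $d$, a difference vector whose support on $\delta$ is triangular with respect to this order (isolating $d$ modulo directions already obtained), analogous to the iterative $H_j$ argument used in the proof of Theorem~\ref{theorem:box-inequalities}. Verifying that this triangular construction respects feasibility — i.e.\ that each intermediate decomposition genuinely lies in $\sigma$ and not merely in $X_{G\widehat{G}}$ — is where the restriction $F = \{f\}$ is essential, since it guarantees no path inequalities \eqref{eq:path} or cut inequalities \eqref{eq:cut} for other $F$-edges can be violated by these local modifications.
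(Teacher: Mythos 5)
Your sufficiency plan follows essentially the same route as the paper's, and your necessity sketch is a reasonable direct argument (the paper itself only cites Condition $C1$ of Theorem~5 of \citet{andres-2023-a-polyhedral} for necessity). But there are two concrete errors that would make the proof fail as written. The first is that you read the variables backwards: in this formulation $x_e = 1$ means the endpoints of $e$ lie in \emph{distinct} components. Hence the cut inequality $1 - x_f \leq \sum_{e \in \delta}(1 - x_e)$ is tight iff either $x_f = 1$ and $x_e = 1$ for \emph{all} $e \in \delta$, or $x_f = 0$ and exactly one $e \in \delta$ has $x_e = 0$ --- the opposite of your characterization. Consequently, the equality your necessity argument should extract is $x_d = 1$ on $\sigma$, not $x_d = 0$: if $x_f = 1$, tightness alone forces $x_d = 1$; if $x_f = 0$, take a $uw$-path inside the common component of $u$ and $w$; its edges all have value $0$, so it crosses $\delta$ only in the unique joined edge $e$, and no other $d' \in \delta$ can be a chord of it (otherwise $x_{d'} = 0$, violating tightness), so it is an $f_e$-path with respect to $\delta$, whence $e \neq d$ and again $x_d = 1$. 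With that repair your conclusion (two linearly independent equalities on $\sigma$, hence not a facet of the full-dimensional polytope) is sound.

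The second error is in the sufficiency part: the vectors $\mathbbm{1}_{\set{d}}$ for $d \in \delta$ can \emph{never} lie in $\lin(\sigma - \sigma)$, because every vector of that space is orthogonal to the normal of the equality defining $\sigma$ (the vector with entry $-1$ at $f$, $+1$ on $\delta$, and $0$ elsewhere), while $\mathbbm{1}_{\set{d}}$ has inner product $1$ with it. The correct targets are $\mathbbm{1}_{\set{f,d}}$, and once you aim for those, your ``main obstacle'' disappears --- no triangular ordering of $\delta$ is needed. Indeed, after establishing $\mathbbm{1}_{\set{e}} \in \lin(\sigma - \sigma)$ for every $e \in E \setminus \delta$ (your pairs built from $V_1 = \set{e}$ and $V_2 = \emptyset$, exactly as in the paper), take the all-ones point of $\sigma$ minus the point whose single component is the node set $V_P$ of an $f_d$-path; this difference is the indicator of all edges of $E \cup F$ with both endpoints in $V_P$, and every such edge other than $f$ and $d$ lies in $E \setminus \delta$, since an edge of $\delta \setminus \set{d}$ with both endpoints in $V_P$ would contradict the definition of an $f_d$-path. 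Subtracting the corresponding unit vectors isolates $\mathbbm{1}_{\set{f,d}}$, and the family $\set{\mathbbm{1}_{\set{e}} \mid e \in E \setminus \delta} \cup \set{\mathbbm{1}_{\set{f,d}} \mid d \in \delta}$ is trivially linearly independent however much the $f_d$-paths for different $d$ overlap. (A minor further slip: since $\dim \Xi_{G\widehat{G}} = |E|+1$ here, a facet requires $|E|+1$ affinely independent points, equivalently $|E|$ linearly independent difference vectors, not $|E|$ affinely independent points.)
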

\begin{lemma}\label{lemma-complexity-fd}
	For any connected graph $G=(V,E)$,
	any augmentation $\widehat{G} = (V, E \cup F)$ with $F \cap E = \emptyset$,
	any $f \in F$ and
	any $f$-cut $\delta$ of $G$,
	it is \textsc{np}-hard to decide if an $f_d$-path in $G$ with respect to $\delta$ exists for all $d \in \delta$, even for the special case of $F = \set{f}$.
\end{lemma}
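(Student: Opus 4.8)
The plan is to give a polynomial-time reduction from \textsc{3-sat}, following the construction illustrated in \Cref{fig:example-np-hardness}. Given a \textsc{3-sat} formula $\phi$ with variables $x_1, \dots, x_n$ and clauses $C_1, \dots, C_m$, I would build a connected graph $G$, an augmentation $\widehat{G} = (V, E \cup \set{f})$ with $f = uw$, and an $f$-cut $\delta$ of $G$ such that an $f_d$-path with respect to $\delta$ exists for \emph{every} $d \in \delta$ if and only if $\phi$ is satisfiable; note that this realizes the universally quantified condition of the lemma already for $F = \set{f}$. The graph chains three parts between $u$ and $w$: (i) a series of \emph{clause gadgets}, one per clause, each offering three parallel literal-nodes between an entry and an exit node, with $u$ the entry of the first gadget and $d_1$ the exit of the last; (ii) a single distinguished edge $d = d_1 d_2$; and (iii) a \emph{variable gadget}, a ladder in which every $d_2$--$w$ path selects, for each variable $x_i$, at least one of the two nodes representing $x_i$ and $\neg x_i$. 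Each literal-node $\ell$ of a clause gadget is joined to the variable-gadget node representing $\neg\ell$, and these edges, together with $d$ and one edge $d_1 w'$ to an auxiliary hub $w'$ that is adjacent to all variable-gadget nodes and to $w$, constitute $\delta$. Concretely $\delta = \delta(S)$ for $S$ the set of all clause-side nodes together with $d_1$, so $\delta$ is a genuine $uw$-cut.

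The core of the argument is the analysis of $f_d$-paths for the distinguished edge $d = d_1 d_2$. Any $f_d$-path crosses $\delta$ only through $d$, since using any other cut edge, or merely containing both its endpoints, is forbidden; hence the path contains $d_1$, and because $d_1 w' \in \delta \setminus \set{d}$ it cannot contain $w'$. It must therefore genuinely traverse the clause gadgets, cross $d$, and reach $w$ through the final rung of the ladder rather than via the hub. Such a path picks exactly one literal per clause, and for each picked literal $\ell$ it avoids the node $\neg\ell$ of the variable gadget, as $\ell$--$\neg\ell \in \delta \setminus \set{d}$. Given a satisfying assignment $\varphi$, I would pick in each clause a literal true under $\varphi$ and route the ladder through exactly the $\varphi$-true node of each pair; this avoids every forbidden node and yields an $f_d$-path. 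Conversely, given an $f_d$-path, the picked literals cannot contain both $x_i$ and $\neg x_i$ for any $i$, since that would force the path to avoid both nodes of the $i$-th pair, contradicting the ladder's property that every traversal meets at least one node of each pair; hence the picked literals extend to a consistent satisfying assignment.

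To discharge the universal quantifier, I would exhibit an $f_d$-path unconditionally for every cut edge $d \neq d_1 d_2$, exploiting the hub. A cut edge $\ell$--$\neg\ell$ is crossed by a path that reaches $\ell$ through the clause gadgets—choosing in earlier clauses literals other than $\ell$—crosses to $\neg\ell$, and then escapes to $w$ via $w'$; the edge $d_1 w'$ is crossed by $u \to \dots \to d_1 \to w' \to w$. In each case the path contains both endpoints of no second cut edge, so it is a valid $f_d$-path. Consequently an $f_d$-path exists for all $d \in \delta$ if and only if one exists for $d = d_1 d_2$, i.e.\ if and only if $\phi$ is satisfiable, which establishes \textsc{np}-hardness.

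I expect the main obstacle to be the backward direction for $d = d_1 d_2$: an $f_d$-path need not visit exactly one node of each variable pair, so an assignment cannot be read off naively. The crucial structural fact to establish is that the ladder forces every $d_2$--$w$ path avoiding $w'$ to meet at least one node of each pair $\set{x_i, \neg x_i}$—a connectivity bottleneck between consecutive rungs—since this is precisely what rules out contradictory literal choices. Care is likewise needed to verify that $\delta$ is a proper $uw$-cut, that the hub escapes used for the non-special edges never place both endpoints of a second cut edge on the path, and that, when a variable recurs across clauses, the escape route selects literals avoiding an already-crossed copy; these checks are routine but indispensable for making the reduction airtight.
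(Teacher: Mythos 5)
Your proposal is correct and follows essentially the same reduction as the paper's proof: a \textsc{3-sat} instance is encoded as a clause chain on one side of the cut and a variable chain on the other, joined by a distinguished cut edge $d = d_1 d_2$, with a hub node $w'$ (and cut edge $d_1 w'$) guaranteeing $f_{d'}$-paths unconditionally for every other cut edge, so that an $f_d$-path for the distinguished $d$ exists if and only if the formula is satisfiable. The differences are cosmetic — parallel literal nodes between entry/exit nodes rather than fully connected consecutive layers, and your ``ladder bottleneck'' property is exactly what the paper establishes in \Cref{claim-graph-properties}.
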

\begin{theorem}
\label{theorem:np-hardness}
	For any connected graph $G=(V,E)$,
	any augmentation $\widehat{G} = (V, E \cup F)$ with $F \cap E = \emptyset$,
	any $f \in F$ and any $f$-cut $\delta$ of $G$,
	it is \textsc{np}-hard to decide if the cut inequality $1-x_f \leq \sum_{e \in \delta} (1-x_e)$ is facet-defining for $\Xi_{G \widehat{G}}$, even for the special case of $F = \set{f}$.
\end{theorem}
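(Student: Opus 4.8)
The plan is to obtain \Cref{theorem:np-hardness} as an immediate consequence of the two preceding lemmas, both of which are stated for the special case $F = \set{f}$. The decision problem of interest takes as input a connected graph $G$, an augmentation $\widehat{G}$ with $F = \set{f}$, and an $f$-cut $\delta$, and asks whether the cut inequality $1 - x_f \leq \sum_{e \in \delta}(1 - x_e)$ is facet-defining for $\Xi_{G \widehat{G}}$. By \Cref{lemma-sufficiency-fd}, restricted to $F = \set{f}$, this inequality is facet-defining if and only if an $f_d$-path in $G$ with respect to $\delta$ exists for every $d \in \delta$. Hence, on the class of instances with $F = \set{f}$, the facet-definingness problem and the $f_d$-path-existence problem have identical yes/no answers.

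First I would turn this equivalence into a formal reduction: the identity map on instances with $F = \set{f}$ is a trivial polynomial-time reduction from the $f_d$-path-existence problem to the facet-definingness problem, because \Cref{lemma-sufficiency-fd} guarantees that an instance is a yes-instance of one problem precisely when it is a yes-instance of the other. Composing this with the reduction supplied by \Cref{lemma-complexity-fd} --- which reduces \textsc{3-sat} to the $f_d$-path-existence problem and already produces instances with $F = \set{f}$ --- yields a polynomial-time reduction from \textsc{3-sat} to the facet-definingness problem. Since \textsc{3-sat} is \textsc{np}-hard, so is deciding facet-definingness of cut inequalities, even for $F = \set{f}$.

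The only point to verify at the level of the theorem is that the two lemmas are compatible: the reduction of \Cref{lemma-complexity-fd} must output exactly the kind of instance to which the biconditional of \Cref{lemma-sufficiency-fd} applies. Both lemmas are phrased for the same special case $F = \set{f}$, so this compatibility holds by construction and no further argument is needed. In this sense the theorem is a short corollary, and all of the substance resides upstream in the two lemmas. The genuine obstacles are therefore establishing the sufficiency direction of \Cref{lemma-sufficiency-fd} (constructing $\lvert E \cup F \rvert - 1$ affinely independent feasible points on the cut face out of the family of $f_d$-paths) and verifying the correctness of the \textsc{3-sat} reduction in \Cref{lemma-complexity-fd} (showing that satisfying assignments correspond exactly to the existence of $f_d$-paths for all $d \in \delta$, as illustrated in \Cref{fig:example-np-hardness}). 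Given those two lemmas, the theorem follows by the chaining argument above.
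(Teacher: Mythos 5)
Your proposal is correct and follows exactly the paper's own argument: invoke the biconditional of \Cref{lemma-sufficiency-fd} for the case $F = \set{f}$, chain it with the \textsc{3-sat} reduction of \Cref{lemma-complexity-fd}, and conclude \textsc{np}-hardness of deciding facet-definingness. The paper states this in three sentences; your additional remarks about reduction composition and compatibility of the two lemmas make explicit what the paper leaves implicit, but introduce no new ideas.
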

In the remainder of this section, we prove first \Cref{theorem:np-hardness} and then \Cref{lemma-sufficiency-fd} and \Cref{lemma-complexity-fd}.
\begin{proof}[Proof of \Cref{theorem:np-hardness}]
	In case $F = \set{f}$, a cut inequality is facet defining if and only if there exists an $f_d$-path in $G$ with respect to $\delta$ for all $d \in \delta$, by \Cref{lemma-sufficiency-fd}.
	Deciding if such paths exist is \textsc{np}-hard, by \Cref{lemma-complexity-fd}. 
	Together, this implies \textsc{np}-hardness of deciding facet-definingness, even for the special case of $F = \set{f}$.
\end{proof}

\begin{proof}[Proof of \Cref{lemma-sufficiency-fd}]
	Necessity of an equivalent statement was already proven as Condition $C1$ of Theorem 5 of \citet{andres-2023-a-polyhedral}.

	We now show sufficiency.
	For this, let $F = \{f\} = \{uw\}$, let $\sigma = {\{x \in X_{G \widehat{G}} \ \mid \ 1 - x_f = \sum_{d \in \delta} (1-x_d)\}}$ and assume that there exists an $f_d$-path in $G$ with respect to $\delta$ for all $d \in \delta$.
	We prove that the cut inequality with respect to $f$ and $\delta$ is facet-defining under the specified conditions by explicitly constructing ${\lvert E \cup F \rvert - 1} = \lvert E \rvert$ linearly independent vectors in the linear space $\lin(\set{x-y \mid x,y \in \sigma})$ which we abbreviate by ${\lin(\sigma - \sigma)}$, implying $\dim \aff \sigma = \dim \lin(\sigma - \sigma) = \lvert E \rvert$. In particular, we construct $\mathbbm{1}_\set{d,f}$ for all $d \in \delta$ and the characteristic vectors of the elements in $E \setminus \delta$.

	For any $e \in E \setminus \delta$, define:
	\begin{equation*}
		V_1 = \set{e} \quad V_2 = \emptyset
		\enspace .
	\end{equation*}
	As $G[U]$ is connected for any $U \in V_1$ and $U \in V_2$, we have $x^{V_1}, x^{V_2} \in X_{G \widehat{G}}$, by \Cref{lemma:construction}.
	It further holds $\mathbbm{1}_\set{e} = -x^{V_1}+x^{V_2}$ and, for $j \in \{0,1\}$, that $x^{V_j}_{uw} = 1$ and $x^{V_j}_{d} = 1$ for all $d \in \delta$, as for all $pq \in E \cup F$:
	\begin{itemize}
		\item $x^{V_1}_{pq} = 1 \ \text{and} \ x^{V_2}_{pq} = 1 \ \text{if} \ \nexists U \in V_1\colon \set{p,q} \subseteq U$
		\item $x^{V_1}_{pq} = 0 \ \text{and} \ x^{V_2}_{pq} = 1 \ \text{if} \ \exists U \in V_1\colon \set{p,q} \subseteq U$\,.
	\end{itemize}
	It follows from $x^{V_j}_{uw} = 1$ and $x^{V_j}_{d} = 1$ for all $d \in \delta$ that $x^{V_j} \in \sigma$. 
	Thus, $\mathbbm{1}_\set{e} = -x^{V_1}+x^{V_2} \in \lin(\sigma - \sigma)$, which concludes the first case.

	For any $d \in \delta$, there exists an $f_d$-path $P=(V_{P}, E_{P})$ in $G$ with respect to $\delta$ according to our assumptions.
	We define:
	\begin{equation*}
		V_1 = \set{V_{P}} \quad V_2 = \emptyset 
		\enspace .
	\end{equation*}
	Analogously to the previous case, we get $x^{V_1} \in X_{G \widehat{G}}$, $x^{V_2} \in \sigma$ and $\mathbbm{1}_{E_{P} \cup \set{f}} = -x^{V_1}+x^{V_2}$.
	Using the same distinction of cases as before, we further get $x^{V_1}_{uw} = 0$ and, as $P$ is an $f_d$ path, $x^{V_1}_d = 0$ and $x^{V_1}_{d'} = 1$ for all $d' \in \delta \setminus \set{d}$, implying $x^{V_1} \in \sigma$.
	Consequently, $\mathbbm{1}_{E_{P} \cup \set{f}} = -x^{V_1}+x^{V_2} \in \lin(\sigma - \sigma)$.
	We now note that the characteristic vector associated with $f$ and $d$ can be written as
	\begin{equation}
		\mathbbm{1}_{\set{f,d}} = \mathbbm{1}_{E_{P} \cup \set{f}} - \sum_{e \in E_{P} \setminus \set{d}} \mathbbm{1}_{\set{e}} \enspace .
	\end{equation}
	As $\mathbbm{1}_{\set{e}} \in \lin(\sigma - \sigma)$ for all $e \in E_{P} \setminus \set{d}$ by the previous case, this implies $\mathbbm{1}_{\set{f,d}} \in \lin(\sigma - \sigma)$ and concludes the second case.
	Altogether, we have constructed $\lvert E \rvert$ linearly independent vectors in $\lin(\sigma - \sigma)$ and have thus established sufficiency of the specified condition.
\end{proof}

\begin{proof}[Proof of \Cref{lemma-complexity-fd}]
	For showing \textsc{np}-hardness, we use a reduction from the \textsc{np}-hard \textsc{3-sat} problem with exactly three literals per clause and no duplicating literals within clauses \citep{3Sat}.
	For any instance of this \textsc{3-sat} problem, with variables $x_1,x_2, \ldots , x_n$ and clauses $C_1,C_2, \ldots , C_m$, we construct in polynomial time an instance of our decision problem and show that it has a solution if and only if the instance of the \textsc{3-sat} problem has a solution.
	An example of this construction is depicted in \Cref{fig:example-np-hardness}.
	We begin by defining two graphs, $G_1$ and $G_2$, which will be the components of $G$ induced by the $f$-cut $\delta$ of our original decision problem.

	In the first graph $G_1=(V_1, E_1)$, there are $3m+2$ nodes which are organized in $m+2$ fully-connected layers. 
	For $j \in \set{0,1,\ldots,m+1}$, we denote the set of nodes in the $j$-th layer by $V_{1j}$.
	The $0$-th layer contains a single node $u$ and the $m+1$-th layer a single node $d_1$. 
	The remaining $m$ layers correspond to the $m$ clauses $C_1, C_2, \ldots , C_m$ and contain three nodes each. 
	The edges between consecutive layers are the only edges in $E_1$.
	For $j \in \set{1,2,\ldots,m}$, we label each node in the $j$-th layer by a different literal in $C_j$.
	For completeness, we label $u$ (respectively $d_1$) by a unique auxiliary propositional variable $x_u$ (respectively $x_{d_1}$).
	For any $v \in V_1$, we let $l(v)$ denote the label of that node.

	The second graph $G_2=(V_2, E_2)$ is such that $V_1 \cap V_2 = \emptyset$ and $E_1 \cap E_2 = \emptyset$.
	It consists of $2n+3$ nodes which are organized in $n+3$ fully-connected layers.
	For $k \in \set{0,1,\ldots,n+2}$, we denote the set of nodes in the $k$-th layer by $V_{2k}$. 
	The $0$-th layer contains a single node $d_2$, the $n+1$-th layer a single node $w$ and the $n+2$-th layer a single node $w^\prime$, which is connected to all other nodes of $G_2$, besides $d_2$, by a set of edges $E_2^\prime \subseteq E_2$. 
	The remaining $n$ layers correspond to the $n$ variables $x_1,x_2, \ldots , x_n$ and contain two nodes each.
	The edges between consecutive layers and the edges in $E'_2$ are the only edges in $E_2$.
	For $k \in \set{1,2,\ldots,n}$, we label one node in the $k$-th layer by $x_k$ and the other by $\neg\,x_k$.
    Again, we label $w$ (respectively $d_2$ and $w'$) by a unique auxiliary propositional variable $x_w$ (respectively $x_{d_2}$ and $x_{w'}$) and denote the label of any $v \in V_2$ by $l(v)$.

	We construct a third graph $G=(V,E)$ such that $V = V_1 \cup V_2$ and $E = E_1 \cup E_2 \cup \delta$ with
	\begin{align*}
		\delta =\;
        & \bigl\{d_1d_2, d_1w'\bigr\} \ \cup \\
        & \bigl\{
            st \subseteq V_1 \cup V_2 \mid\, s \in V_1 \wedge t \in V_2 \wedge l(s) = \neg\, l(t)         
          \bigr\}
          \enspace .
	\end{align*}
	For brevity, we introduce the symbol $d := d_1 d_2$.
	Finally, we define a fourth graph $\widehat{G} = (V, E \cup F)$ such that $F = \set{f} = \set{uw}$.
    Note that $G$ is connected and that $\delta$ is an $f$-cut of $G$, partitioning it into $V_1$ and $V_2$.
	Note also that $|F| = 1$, covering the part of the lemma claiming \textsc{np}-hardness also for this special case. 
	
	Henceforth, we mean by an $f_d$-path an $f_d$-path in $G$ with respect to $\delta$.
	\begin{claim}\label{claim-graph-properties}
		The graph $G$ has the following properties:
		\begin{enumerate}[label=\textnormal{(\roman*)}]
			
			\item \label{item-graph-property-2} For any clause $C_j$ and any $f_d$-path $(V_P, E_P)$, there exists a literal in $C_j$ labeled by a node from $V_P \cap V_1$. 
			\item \label{item-graph-property-3} Any $f_d$-path that contains a node in $V_1$ labeled $\neg\,x_k$ (respectively $x_k$) does not contain a node labeled $x_k$ (respectively $\neg\,x_k$).
			
		\end{enumerate}
	\end{claim}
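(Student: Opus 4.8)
My plan is to first extract the rigid combinatorial structure that the definition of an $f_d$-path forces, and then read both properties off from it. Let $P = (V_P, E_P)$ be any $f_d$-path, i.e.\ a simple $uw$-path in $G$ with $d' \not\subseteq V_P$ for every $d' \in \delta \setminus \set{d}$. Since $\delta$ is an $f$-cut separating $V_1$ from $V_2$ and $u \in V_1$, $w \in V_2$, the path crosses $\delta$ an odd number of times, and every crossing traverses a cut edge and thus puts both of its endpoints into $V_P$. The $f_d$-condition forbids any cut edge other than $d$ from having both endpoints in $V_P$, so $P$ can traverse only $d = d_1 d_2$, and being simple it does so exactly once. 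Hence $d_1, d_2 \in V_P$, the portion of $P$ in $V_1$ is a simple $u$--$d_1$ path using only edges of $E_1$, and the portion in $V_2$ is a simple $d_2$--$w$ path using only edges of $E_2$.

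Next I would exploit the layered construction. Because $d_1 \in V_P$ and $d_1 w' \in \delta \setminus \set{d}$, the $f_d$-condition gives $w' \notin V_P$; therefore the $V_2$-portion uses no edge of $E_2'$ and runs inside the clean layered graph on the layers $V_{20}, \dots, V_{2,n+1}$. In $G_1$ and in $G_2$ with $w'$ removed, every edge joins consecutive layers, so the layer index moves by exactly $\pm 1$ along any path. A discrete intermediate-value argument then shows that the $u$--$d_1$ portion, running from layer $0$ to layer $m+1$, meets every clause layer $V_{1j}$ for $1 \le j \le m$, and that the $d_2$--$w$ portion, running from layer $0$ to layer $n+1$, meets every variable layer $V_{2k}$ for $1 \le k \le n$.

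Property \ref{item-graph-property-2} is then immediate: the $u$--$d_1$ portion meets $V_{1j}$, whose nodes are labeled exactly by the three literals of $C_j$, so $V_P \cap V_1$ contains a node labeled by a literal of $C_j$. For property \ref{item-graph-property-3}, I would assume $V_P$ contains a $V_1$-node $s$ labeled $\neg x_k$ and let $a, b$ denote the nodes of $V_{2k}$ labeled $x_k$ and $\neg x_k$. From $l(s) = \neg l(a)$ I get $sa \in \delta \setminus \set{d}$, hence $a \notin V_P$; since the $d_2$--$w$ portion must meet $V_{2k}$ and $a \notin V_P$, it is forced to use $b$, so $b \in V_P$. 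Now any node $z$ labeled $x_k$ is excluded: if $z \in V_2$ then $z = a \notin V_P$, and if $z \in V_1$ then $l(z) = \neg l(b)$ yields $zb \in \delta \setminus \set{d}$ with $b \in V_P$, forcing $z \notin V_P$. The symmetric statement follows by swapping the roles of $x_k$ and $\neg x_k$.

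The crossing count and the intermediate-value step are routine; the delicate point is property \ref{item-graph-property-3}. The naive hope of forbidding nodes labeled $x_k$ and $\neg x_k$ from $V_P$ directly fails when both offending nodes lie in $V_1$, because no cut edge joins two $V_1$-nodes. The argument must therefore be routed through the $V_2$-side, and the key is to first eliminate the $V_2$-node labeled $x_k$, which pins down that the forced visit to layer $V_{2k}$ uses the node labeled $\neg x_k$; only then does the conflicting cut edge with a $V_1$-node labeled $x_k$ appear. Making this cross-side interaction precise is the main obstacle I anticipate.
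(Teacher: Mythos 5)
Your proposal is correct and follows essentially the same route as the paper: both first establish that any $f_d$-path must avoid $w'$ and therefore visit every layer of $G_1$ and of $G_2$ (minus $w'$), then read off property \ref{item-graph-property-2} directly and derive property \ref{item-graph-property-3} from the cut edges joining oppositely labeled nodes. Your reorganization of \ref{item-graph-property-3} (eliminating the $V_2$-node labeled $x_k$ first, forcing the node labeled $\neg\,x_k$ into $V_P$, then excluding all $x_k$-labeled nodes) is logically equivalent to the paper's case distinction on whether the conflicting node lies in $V_1$ or $V_2$.
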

	Using \Cref{claim-graph-properties}, which is proven in the appendix, we show that the \textsc{3-sat} formula is satisfiable if and only if there exists an $f_{d^\prime}$-path for every $d^\prime \in \delta$.
	We do so in two steps:
	Firstly, we show that the \textsc{3-sat} formula is satisfiable if and only if there exists an $f_d$-path for the specific edge $d \in \delta$.
	Secondly, we show that there always exists an $f_{d^\prime}$-path for every other edge $d^\prime \in \delta \setminus \set{d}$.

	Let $P=(V_P, E_P)$ be an $f_d$-path.
	We construct an assignment of truth values $\varphi$ to the variables $x_1,x_2, \ldots, x_n$ satisfying the corresponding \textsc{3-sat} problem instance by setting $\varphi(x_k) = \textsc{true}$ for all $k \in \set{1, \ldots, n}$ if and only if there exists a node $v \in V_P \cap V_1$ such that $l(v) = x_k$.
	Assume this assignment would not satisfy the \textsc{3-sat} problem instance. Then there exists a clause $C_j$ assigning \textsc{false} to all of its labels.
	By \ref{item-graph-property-2}, there exists a node $v \in V_P \cap V_1$ that is labeled by a literal in $C_j$.
	If $l(v) = x_k$ for some variable $x_k$, then $\varphi(x_k) = \textsc{true}$, leading $C_j$ to be true. 
	If $l(v) = \neg\,x_k$, then $\varphi(x_k) = \textsc{false}$ by \ref{item-graph-property-3}, leading $C_j$ to be true as well.
	Consequently, such a clause $C_j$ where all literals get assigned \textsc{false} cannot exist and $\varphi$ is a solution to the given \textsc{3-sat} problem instance.

	Let now $\varphi$ be an assignment of truth values to the variables $x_1,x_2, \ldots, x_n$ that satisfies the corresponding instance of the \textsc{3-sat} problem.
	In the following, we will show that an $f_d$-path $P = (V_P, E_P)$ in $G$ is given by
	\begin{align*}
		V_P &= 
			\bigl\{
				u, u_1,\ldots, u_m, d_1, d_2, w_1, \ldots, w_n, w
			\bigr\} \\
		E_P &= 
			\bigl\{
				uu_1, u_1u_2, \ldots, u_md_1, d_1d_2, \\ 
				& \hspace{5ex} d_2w_1, w_1w_2, \ldots, w_nw
			\bigr\}
		\enspace,
	\end{align*}
	where $u_j \in V_{1j}$ (respectively $w_k \in V_{2k}$) has a label that gets assigned $\textsc{true}$ by $\varphi$ for all $j \in \set{1, \ldots, m}$ (respectively $k \in \set{1, \ldots, n}$).
	It is easy to see that $P$ is an $f$-path in $G$.
	It remains to show that it is an $f_d$-path, i.e. that there exist no $d^* = d^*_1d^*_2 \in \delta \setminus \set{d}$ such that $d^* \subseteq V_P$.
	Assume there exists such a $d^*$.
    As $w' \not\in V_P$, it holds then $d^* \in \delta \setminus \set{d,d_1w'}$.
	By construction of $\delta$, it follows $l(d^*_1) = \neg\, l(d^*_2)$. 
	As both $l(d^*_1)$ and $\neg\, l(d^*_2)$ need to get assigned \textsc{true} by $\varphi$ according to the construction of $P$, this is a contradiction.
	Thus, there exists no such $d^*$, and $P$ is an $f_d$-path.
	For an example of this correspondence between $f_d$-paths and solutions of the given \textsc{3-sat} problem instance, see again \Cref{fig:example-np-hardness}.

	Next, we regard the other edges of the cut.
	Let $d' = d'_1d'_2 \in \delta \setminus \set{d}$ be an edge in the cut except $d$.
	We assume w.l.o.g. that $d'_1 \in V_{1i}$ for some $i \in \set{1, \ldots, m+1}$ and regard the path $P = (V_P, E_P)$ given by
	\begin{align*}
		V_P &= 
		\bigl\{
			u, u_1, u_2, \ldots, u_{i-1}, d'_1, d'_2, w', w
		\bigr\} \\
		E_P &= 
		\bigl\{
			uu_1, u_1u_2, \ldots, u_{i-1}d'_1, d^\prime_1d^\prime_2, d^\prime_2w', w'w
		\bigr\}
		\enspace,
	\end{align*}
	where $u_j$ is an arbitrary node in $V_{1j}$ such that $l(u_j) \neq l(d'_1)$ for all $j \in \set{1, \ldots i-1}$.
	Note that such $u_j$ are guaranteed to exist as we consider the \textsc{3-sat} problem with exactly three literals per clause and no duplicated literals within clauses. 
	Again, it is easy to see that $P$ is an $f$-path, and it remains to show that there exists no $d^* = d^*_1d^*_2 \in \delta \setminus \set{d'}$ such that $d^* \subseteq V_P$.
	Assume there exists such a $d^*$.
	Then one if its nodes, say $d^*_1$, must be in $V_1$ and its other node must be in $V_2$.
	We make a case distinction on whether $d^*_1 \in V_1 \setminus \set{d_1}$ or $d^*_1 = d_1$.
	If $d^*_1 \in V_1 \setminus \set{d_1}$, then $d^* \in \delta \setminus \set{d,d_1w'}$.
	By construction of $\delta$, it follows $l(d^*_1) = \neg\, l(d^*_2)$.
	As $d^*_2 \in V_2 \cap V_P = \set{d'_2, w', w}$ and $l(d^*_1) \neq l(d'_1)$ according to the construction of $P$, this is a contradiction.
	On the other hand, if $d^*_1 = d_1$, it holds by construction of $P$ that $i=m+1$.
	As $d'_1 \in V_{1i} = V_{1m+1} = \set{d_1}$, it follows $d'_1 = d_1 = d^*_1$.
	Furthermore, as $d_1d_2$ and $d_1w'$ are the only edges in $\delta$ containing $d_1$ and $d' \neq d$, we get $d'_2 = w'$.
	Thus, we especially have $d_2 \not\in V_2 \cap V_P = \set{w', w}$, leading to $d^*_2 = w'$ when using the same argument as before.
	Consequently, $d^* = d'$ which contradicts $d^* \in \delta \setminus \set{d'}$.
	As both cases lead to a contradiction, there does not exist such a $d^*$ and $P$ is an $f_d$-path.
    This finishes the reduction from the \textsc{3-sat} problem and the proof of the lemma.
\end{proof}

\section{Conclusion}
\label{section:conclusion}

We characterize in terms of efficiently decidable conditions the facets of lifted multicut polytopes induced by lower box inequalities.
In addition, we show that deciding facet-definingness of cut inequalities for lifted multicut polytopes is \textsc{np}-hard.
Toward the design of cutting plane algorithms for the lifted multicut problem, our hardness result does not rule out the existence of inequalities strengthening the cut inequalities for which facet-definingness and possibly also the separation problem can be solved efficiently.
The search for such inequalities is one direction of future work.
In our proof, we identify a structure (paths crossing the cut that have an edge of the cut as a chord) that complicates the characterization of cut inequalities.
This structure exists for cuts (edge subsets, as discussed in this article) but does not exist for separators (node subsets, not discussed in this article).
This observation motivates the study of non-local connectedness with respect to separators instead of cuts.
\paragraph{Acknowledgements} This work is partly supported by the Federal Ministry of Education and Research of Germany through
DAAD Project 57616814 (\href{https://secai.org/}{SECAI}).

\appendix
\section{Additional Proofs}

\begin{claimproof}[Proof of \Cref{claim:H-partition}]
	It follows directly from \eqref{equation:definition-Hi} that $H_j \cap H_k = \emptyset$ for any distinct $j,k \in \mathbb{N}_0$ and that $\cup_{j \geq 0} H_j \subseteq H$.
	Let $H_\infty = H \setminus \cup_{j \geq 0} H_j$, it remains to show that $H_\infty = \emptyset$.
	Assume this does not hold, then there exists an $st \in H_\infty$, and thus especially a simple $st$-path $(\set{s,t}, \set{st})$ in $\widehat{G}$ whose edges are all in $H_\infty$.
	We show that such a path cannot exist given \ref{theorem:condition:path} and \ref{theorem:condition:cycle}.

	Assume there exist simple paths in $\widehat{G}$ whose edges are all in $H_\infty$.
	As the number of edges in $\widehat{G}$ is finite, there exists a maximum length of such paths.
	Let $P = (V_P, E_P)$ with $E_P \subseteq H_\infty$ be one of those simple paths with maximum length, let $p,q \in V$ be its end-nodes and let $e_p, e_q \in E_P$ be the unique edges in $E_P$ containing $p$ and $q$, respectively.
	Recall that, by definition of $H$, all edges in $E_P$ are $uw$-separators of $G$ and no node in $V_P$ is a $uw$-cut-node of $G$.
	By \eqref{equation:definition-Hi}, there exists a $qq' \in E_{\widehat{G}}(q) \setminus \set{e_q}$ such that $qq'$ is a $uw$-separator and $qq' \not\in \cup_{j \geq 0} H_j$.
	By definition of $H$, this is equivalent to $q'$ being either a $uw$-cut-node of $G$ or $qq' \in H_\infty$.
	It is not possible that $qq' \in H_\infty$, as either $q' \in V_P$ and there exists a cycle in $(V_P, E_P \cup \set{qq'})$ that violates \ref{theorem:condition:cycle}, or $q' \not\in V_P$ and $(V_P \cup \set{q'}, E_P \cup \set{qq'})$ is a simple path in $G$ whose edges are all in $H_\infty$, contradicting $P$ to be the longest such path.
	Thus, $q'$ must be a $uw$-cut-node.
	Further, it holds $q' \not\in V_P$ as no node in $V_P$ is a $uw$-cut-node of $G$.
	Analogously, there must exist a $uw$-cut-node $p' \in V \setminus V_P$ such that $p'p \in E_{\widehat{G}}(p) \setminus \set{e_p}$.

	If $p' = q'$, the simple cycle $(V_{P} \cup \set{p'}, E_{P} \cup \set{p'p, qq'})$ violates \ref{theorem:condition:cycle}.
	If $p' \neq q'$, the simple $p'q'$-path $(V_{P} \cup \set{p',q'}, E_{P} \cup \set{p'p, qq'})$ violates \ref{theorem:condition:path}.
	As both cases lead to a contradiction, there cannot exist simple paths in $\widehat{G}$ whose edges are all in $H_\infty$, and thus especially no $st \in H_\infty$.
\end{claimproof}

\begin{claimproof}[Proof of \Cref{claim-graph-properties}]
	For proving \ref{item-graph-property-2} and \ref{item-graph-property-3}, we first show that any $f_d$-path contains one node from each layer of $G$ besides $V_{2n+2} = \set{w^\prime}$. 
	Assume this does not hold.
	Then there exists an $f_d$-path $P=(V_P, E_P)$ and a layer $V_{1j}$ with $j \in \set{0,\ldots,m+1}$ or $V_{2k}$ with $k \in \set{0,\ldots,n+1}$ such that no node in this layer is contained in $P$. 
	As $P$ is an $f_d$-path, it holds that $d_1 \in V_P$ and $E_P \cap \delta = \set{d}$.
	As $d_1w' \in \delta$, this especially implies that $w' \not\in V_P$ and thus $E_P \cap E_2^\prime = \emptyset$.
	Hence, $E_C$ must be a subset of the remaining edges $E_1 \cup E_2 \cup \set{d} \setminus E_2^\prime$.
	As these edges only exist between consecutive layers and $P$ contains $u \in V_{1,0}$ and $w \in V_{2,n+1}$, having a layer in-between for which $P$ does not contain a node would imply $P$ not being connected and thus results in a contradiction.

	Assume \ref{item-graph-property-2} does not hold.
	Then there exists a clause $C_j$ and an $f_d$-path $P$ such that no node in $V_P \cap V_1$ is labeled by a literal in $C_j$.
	By construction of the labels, this would imply that there exists no node in $P$ that is in $V_{1j}$, contradicting the discussion of the previous paragraph.

	Assume \ref{item-graph-property-3} does not hold.
	Then there exists an $f_d$-path $P=(V_P, E_P)$ containing an $s \in V_1 \cap V_P$ with $l(s) = \neg\,x_k$ (respectively $x_k$) and a $t \in V_P$ with $l(t) = x_k$ (respectively $\neg\,x_k$). 
	We make a case distinction depending on whether $t$ is in $V_1$ or $V_2$.
	Suppose $t \in V_1$.
	By the discussion of the first paragraph, $P$ contains a $v \in V_{2k} \cap V_P$ with either $l(v) = x_k$ or $l(v) = \neg\,x_k$.
	By construction of $\delta$, it follows either $sv \in \delta \setminus \set{d}$ or $tv \in \delta \setminus \set{d}$, contradicting $P$ to be an $f_d$-path.
	Suppose $t \in V_2$.
	In this case, $st \in \delta \setminus \set{d}$ by construction of $\delta$, contradicting $P$ to be an $f_d$-path.
	As both cases lead a contradiction, such nodes $s$ and $t$ cannot exist.
\end{claimproof}

\bibliography{main}
\bibliographystyle{plainnat}

\end{document}